\providecommand{\algorithmname}{Algorithm}
\def\BibTeX{{\rm B\kern-.05em{\sc i\kern-.025em b}\kern-.08em
    T\kern-.1667em\lower.7ex\hbox{E}\kern-.125emX}}
\theoremstyle{plain}
\newtheorem{theorem}{\protect\theoremname}
\theoremstyle{remark}
\theoremstyle{plain}
\theoremstyle{plain}
\providecommand{\corollaryname}{Corollary}
\providecommand{\remarkname}{Remark}
\providecommand{\theoremname}{Theorem}
\providecommand{\lemmaname}{Lemma}
\begin{document}

\title{Nested Hybrid Cylindrical Array Design and DoA Estimation for Massive IoT Networks}

\author{Zhipeng Lin, Tiejun Lv, \emph{Senior Member, IEEE}, Wei Ni, \emph{Senior
Member, IEEE}, J. Andrew Zhang,\\
 \emph{Senior Member, IEEE}, and Ren Ping Liu, \emph{Senior Member,
IEEE}
\thanks{
This work was supported in part by the National
Natural Science Foundation of China (NSFC) under Grants 61671072, and in part by the Beijing Natural Science Foundation under Grants L192025.
\emph{(Corresponding author: Tiejun Lv.)}

Z. Lin and T. Lv are with the School of Information and Communication
Engineering, BUPT, Beijing, China (email: \{linlzp, lvtiejun\}@bupt.edu.cn).
Z. Lin is also with the School of Electrical and Data Engineering,
UTS, Sydney, Australia.

W. Ni is with the Data61, CSIRO, Sydney, Australia (e-mail: Wei.Ni@data61.csiro.au).

J. A. Zhang and R. P. Liu are with the School of Electrical and Data
Engineering, UTS, Sydney, Australia (e-mail: \{Andrew.Zhang, RenPing.Liu\}@uts.edu.au).}}
\maketitle
\begin{abstract}
Reducing cost and power consumption while maintaining high
network access capability  is a key physical-layer requirement of massive Internet
of Things (mIoT) networks. Deploying a hybrid array is a cost- and
energy-efficient way to meet the requirement, but would penalize system
degree of freedom (DoF) and channel estimation accuracy. This is because
signals from multiple antennas are combined by a radio frequency (RF)
network of the hybrid array. This paper presents a novel hybrid uniform
circular cylindrical array (UCyA) for mIoT networks. We design a nested
hybrid beamforming structure based on sparse array techniques and
propose the corresponding channel estimation method based on the second-order
channel statistics. As a result, only a small number of RF chains
are required to preserve the DoF of the UCyA. We also propose a new
tensor-based two-dimensional (2-D) direction-of-arrival (DoA) estimation
algorithm tailored for the proposed hybrid array. The algorithm suppresses the noise components in all tensor modes and
operates on the signal data model directly, hence improving estimation
accuracy with an affordable computational complexity. Corroborated
by a  Cram\'{e}r-Rao lower bound (CRLB) analysis, simulation results
show that the proposed hybrid UCyA array and the DoA estimation algorithm
can accurately estimate the 2-D DoAs of a large number of IoT devices.
\end{abstract}

\begin{IEEEkeywords}
Massive IoT, massive MIMO, hybrid beamformer, sparse array, tensor.
\end{IEEEkeywords}

\section{Introduction }

\global\long\def\figurename{Fig.}
Low-cost, low-power millimeter-wave (mmWave) techniques
have been developed to provide radio access capacity for massive Internet
of Things (mIoT) applications, such as smart city infrastructure,
healthcare and self-driving cars \cite{re1,re2,re3}.  In an mIoT
network, a large number of IoT devices are connected to an Internet-enabled
system \cite{JunZou,LIN}. Combined with advanced multiple access
techniques, mmWave massive multiple-input multiple-output (MIMO) can
significantly increase network capacity and can be potentially applied
to mIoT networks \cite{MMIot}. However, high hardware cost and power
consumption are two major obstacles of applying mmWave massive MIMO
into mIoT networks \cite{LIN}. It is unrealistic to provide a radio-frequency
(RF) chain for each antenna, as fully digital beamforming techniques
would require \cite{ya1}. Hybrid beamforming is an appropriate architecture
in which a low-dimensional digital beamforming in the baseband and
a high-dimensional analog beamforming at the RF front-end are used
\cite{andrew,ya2}.  Most conventional channel estimation
schemes for hybrid beamforming have been designed with given channel
information \cite{ya2}. Some of them apply RF networks to directly combine the received
signals from multiple antennas, resulting in resolution losses of
channel estimation accuracy \cite{re4Survey1,re4Survey2}. As a result,
the system degree of freedom (DoF), measuring the number of targets
which can be sensed and estimated at the base station (BS) \cite{Nested},
would decrease.

To increase the system DoF with a limited number of antennas, the
concept of sparse array, such as minimum redundancy array (MRA) \cite{MRA},
minimum hole array (MHA) \cite{MHA}, nested array \cite{Nested},
and coprime array \cite{coprime}, has attracted considerable attention.
By exploiting the second-order statistics of impinging signals, these
sparse arrays are capable of identifying $O(N^{2})$ uncorrelated
sources with only $N$ physical elements. However, existing sparse
array techniques have been typically used to design linear or square
arrays. Compared to square arrays, circular arrays have a much more
compact size, less sensitivity to mutual coupling, and inherently
more symmetric structure \cite{Circular1}, and hence, they are more suitable
for mIoT applications.

Channel estimation is challenging for sparse arrays. There have been
attempts to apply the celebrated multiple signal classification (MUSIC)
algorithm to networks equipped with sparse arrays \cite{Nested,Nested2D1,Nested2D2,coprime,we11}.
But the estimation accuracy of this algorithm is unsatisfactory, depending
on the searching step and signal correlation. Tensor-based multi-dimensional
MUSIC algorithms were proposed in \cite{tensorMUSIC,Tensor2DMUSIC}
for sparse arrays to improve estimation accuracy. However, since the
MUSIC spectrum of their algorithms is a product of multiple separable
second-order spectra, undesirable \textit{cross-terms} \cite{Tensor2DMUSIC}
would arise, leading to incorrect spectral peak search results. To
solve this problem, CANDECOMP/PARAFAC (CP)-based tensor channel estimation
algorithms were proposed \cite{tensormmWave}. However, these algorithms
have a very high computational complexity.

 This paper proposes a new nested massive hybrid uniform
circular cylindrical array (UCyA) design and the corresponding tensor-based
angle estimation algorithm for the uplink of mIoT networks. By exploiting
the sparse array techniques, the proposed hybrid antenna array enables
the BS to estimate the DoAs of a large number of devices with much
fewer RF chains than antennas. As a result, the massive access requirement
of mIoT can be met, with significantly reduced hardware cost and network
overhead. Authors of \cite{NestedUCA} and \cite{NestedUCA2} proposed
nested sparse circular arrays for direction-of-arrival (DoA) estimation.
However, they directly computed the autocorrelation of impinging signals,
which unfortunately destructed the original symmetric structures of
circular arrays and penalized the channel estimation accuracy significantly.
Different from \cite{NestedUCA} and \cite{NestedUCA2}, we transform
the nonlinear phase of the UCyA steering vectors to be linear to the
element locations, so that the horizontal symmetric structure of UCyA
can be preserved. In addition, since the DoA estimation algorithms
developed in \cite{NestedUCA} and \cite{NestedUCA2}   were
matrix-based, the estimation accuracy gap between the algorithms and
the CRLB is large and the algorithms cannot be directly applied to
high-dimension DoA estimation. For our new hybrid UCyA array design,
we propose a new tensor $n$-rank enhancement method and a new tensor-based
two-dimensional (2-D) DoA estimation algorithm. The algorithm suppresses
the noise components in each mode of the signal tensor model. As a
result, the DoAs of a large number of IoT devices can be accurately
estimated with a much smaller number of RF chains.  The key contributions
of this paper are summarized as follows.
\begin{itemize}
\item We design a new nested hybrid UCyA, which reduces the required number
of RF chains while preserving the inherently horizontal symmetric
structure of the UCyA to maintain a good channel estimation accuracy.
The theory of phase-space transformation is first used to transform
the nonlinear phase of the UCyA steering vectors to be linear to the
element locations. Then, we design the RF-chain connection network
by exploiting the sparse array technique, and utilize its generated
difference coarray for parameter estimation.
\item We analyze the rank relationship between signal matrix and the signal
tensor model in each dimension, and propose a tensor $n$-rank enhancement
method which ensures that the signal and noise subspaces can be properly
decomposed in all dimensions.
\item We propose a new tensor-based two-dimensional (2-D) DoA estimation
algorithm, based on our hybrid array design. We combine the tensor
tool with the estimation of signal parameters via rotational invariance
technique (ESPRIT) to estimate the elevation angles. Then, we substitute
the estimates to derive the azimuth angles by using tensor MUSIC.
Simulation results show that, by suppressing the noise components
in all tensor modes, the proposed algorithm can significantly improve
the estimation accuracy, as compared to the state of the art.
\end{itemize}

The rest of this paper is organized as follows. The system model is
introduced in Sections II. In Sections III and IV, we design the hybrid
beamformer and propose the new spatial smoothing-based method to enhance
the $n$-rank of measurement tensor. In Section V, we design a new
tensor-based 2-D DoA estimation algorithm, and analyze the system
complexity. In Section VII, simulation results are presented, followed
by conclusions in Section VIII.

\textbf{Preliminary and notation:} We
provide a brief introduction of tensor and the notations used in this
paper. Tensor is the generalization of scalar (which has a zero-order
mode), vector (which has an one-order mode), and matrix (which has
two-order modes) to arrays which have an arbitrary order of modes.
We use $\mathcal{A}\in\mathbb{C}^{I_{1}\times I_{2}\times\cdots\times I_{N}}$
to denote an order-$N$ tensor, whose elements (entries) are $a_{i_{1},i_{2},\cdots,i_{N}},$
$i_{n}=1,2,\ldots,I_{n}$, and the indices in the $n$-th mode of
$\mathcal{A}$ range from 1 to $I_{n}$.

In this paper, we use the following notations and
operations in accordance with \cite{TensorDecompositions1}.
\begin{itemize}
\item $a$, $\mathbf{a}$ and $\mathbf{A}$ stand for a
scalar, a column vector, and a matrix, respectively; $\mathbf{I}_{K}$
and $\mathbf{0}_{M\times K}$ denote a $K\times K$ identity matrix
and an $M\times K$ zero matrix, respectively; $\mathbf{A}^{\ast}$,
$\mathbf{A}^{T}$ and $\mathbf{A}^{H}$ denote the conjugate, transpose
and conjugate transpose of $\mathbf{A}$, respectively; $\left\Vert \mathbf{A}\right\Vert _{\textrm{F}}$
denotes the Frobenius norm of $\mathbf{A}$; $\otimes$ and $\diamond$
denote the Kronecker product and Khatri-Rao product, respectively;
and $\textrm{invec}(\cdot)$ denotes the inverse algorithm of vectorization.
\item The mode-$n$ unfolding (also known as matricization)
of a tensor $\mathcal{A}\in\mathbb{C}^{I_{1}\times I_{2}\times\cdots\times I_{N}}$,
denoted by $\mathbf{A}_{(n)}\in\mathbb{C}^{I_{n}\times(I_{1}I_{2}\cdots I_{N}/I_{n})}$,
arranges the fibers in the $n$-th mode of $\mathcal{A}$ as the columns
of the resulting matrix $\mathbf{A}_{(n)}$. The $n$-rank of $\mathcal{Y}$,
denoted by $\textrm{Rank}_{n}(\mathcal{A})$, is the rank of the mode-$n$
 unfolding of tensor $\mathcal{A}$.
\item The $n$-mode product of a tensor $\mathcal{A}\in\mathbb{C}^{I_{1}\times I_{2}\times\cdots\times I_{N}}$
and a matrix $\mathbf{B}\in\mathbb{C}^{J_{n}\times I_{n}}$ is defined
as
\begin{equation}
\mathcal{C}=\mathcal{A}\times_{n}\mathbf{B}\in\mathbb{C}^{I_{1}\times\cdots\times I_{n-1}\times J_{n}\times I_{n+1}\times\cdots\times I_{N}},\label{eq:nmodeproduct-1}
\end{equation}
which can be written in the form of the mode-$n$ matricized tensor:
$\mathbf{C}_{(n)}=\mathbf{B}\mathbf{A}_{(n)}.$
\end{itemize} \begin{itemize}\setlength{\itemsep}{-10pt }
\item The multilinear product of a tensor $\mathcal{A}\in\mathbb{C}^{I_{1}\times I_{2}\times\cdots\times I_{N}}$
and matrices $\mathbf{B}^{(n)}\in\mathbb{C}^{J_{n}\times I_{n}}$,
$n=1,2,\ldots,N$, is a sequence of contractions with each being an
$n$-mode product, i.e.,
\begin{equation}
\mathcal{C}=\mathcal{A}\times_{1}\mathbf{B}^{(1)}\times_{2}\mathbf{B}^{(2)}\cdots\times_{N}\mathbf{B}^{(N)}\in\mathbb{C}^{J_{1}\times J_{2}\times\cdots\times J_{N}},\label{eq:multiproduct-1}
\end{equation}
which can be equivalently expressed as $\mathcal{C}=\left\llbracket \mathcal{A};\mathbf{B}^{(1)},\mathbf{B}^{(2)},\ldots,\mathbf{B}^{(N)}\right\rrbracket .$
The mode-$n$ unfolding (or matricization) of $\mathcal{C}$ is given
by
\begin{align}
\mathbf{C}_{(n)} & =\mathbf{B}^{(n)}\mathbf{A}_{(n)}(\mathbf{B}^{(n+1)}\otimes\mathbf{B}^{(n+2)}\otimes\nonumber \\
 & \cdots\otimes\mathbf{B}^{(N)}\otimes\mathbf{B}^{(1)}\otimes\mathbf{B}^{(2)}\otimes\cdots\otimes\mathbf{B}^{(n-1)})^{T}.\label{eq:propertymultiproduct-1}
\end{align}
\item The outer product of two tensors $\mathcal{A}\in\mathbb{C}^{I_{1}\times I_{2}\times\cdots\times I_{N}}$
and $\mathcal{B}\in\mathbb{C}^{J_{1}\times J_{2}\times\cdots\times J_{M}}$
is given by
\begin{equation}
\mathcal{C}=\mathcal{A}\circ\mathcal{B}\in\mathbb{C}^{I_{1}\times I_{2}\times\cdots\times I_{N}\times J_{1}\times J_{2}\times\cdots\times J_{M}},
\end{equation}
whose elements are $c_{i_{1},i_{2},\cdots,i_{N},j_{1},j_{2},\cdots,j_{M}}=a_{i_{1},i_{2},\cdots,i_{N}}\cdot b_{j_{1},j_{2},\cdots,j_{M}}.$
\end{itemize}
\begin{itemize}\setlength{\itemsep}{-10pt }
\item Two tensors, $\mathcal{A}\in\mathbb{C}^{I_{1}\times I_{2}\times\cdots\times I_{N}}$
and $\mathcal{B}\in\mathbb{C}^{I_{1}\times\cdots\times I_{n-1}\times J_{n}\times I_{n+1}\times\cdots\times I_{N}}$,
can be concatenated in their $n$-th mode, as given by
\begin{equation}
\mathcal{C}=\left[\mathcal{A}\sqcup_{n}\mathcal{B}\right]\in\mathbb{C}^{I_{1}\times\cdots\times I_{n-1}\times\left(I_{n}+J_{n}\right)\times I_{n+1}\times\cdots\times I_{N}}.
\end{equation}
\item The Tucker decomposition decomposes a tensor $\mathcal{A}\in\mathbb{C}^{I_{1}\times I_{2}\times\cdots\times I_{N}}$
into a core tensor $\mathcal{G}\in\mathbb{C}^{R_{1}\times R_{2}\times\cdots\times R_{N}}$
multiplied by a factor matrix $\mathbf{C}^{(n)}=\left[\mathbf{c}_{r_{n}=1}^{(n)},\mathbf{c}_{r_{n}=2}^{(n)},\ldots,\mathbf{c}_{r_{n}=R_{n}}^{(n)}\right]\in\mathbb{C}^{I_{n}\times R_{n}}$ ($\mathbf{c}_{r_{n}}^{(n)}\in\mathbb{C}^{I_{n}\times1}$  and
$n=1,2,\ldots,N$) in each mode, i.e.,
\begin{align}
\mathcal{A} & =\sum_{r_{1}=1}^{R_{1}}\sum_{r_{2}=1}^{R_{2}}\cdots\sum_{r_{N}=1}^{R_{N}}g_{r_{1}r_{2}\cdots r_{N}}\left(\mathbf{c}_{r_{1}}^{(1)}\circ\mathbf{c}_{r_{2}}^{(2)}\circ\cdots\mathbf{c}_{r_{N}}^{(N)}\right)\nonumber \\
 & =\left\llbracket \mathcal{G};\mathbf{C}^{(1)},\mathbf{C}^{(2)},\ldots,\mathbf{C}^{(N)}\right\rrbracket .\label{eq:Tucker}
\end{align}
The higher-order singular value decomposition (HOSVD) is a special
case of the Tucker decomposition, where the core tensor is all-orthogonal
\cite{TensorDecompositions1}, and the factor matrices are the unitary
left singular matrices of the mode-$n$ unfolding of $\mathcal{A}$.
\end{itemize}

\section{System Model}

\begin{figure}
\protect\begin{centering}
\protect\includegraphics[width=4cm]{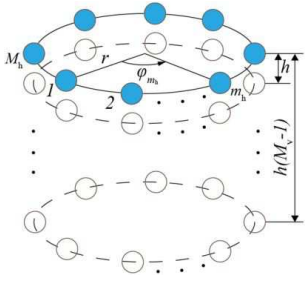}\protect
\par\end{centering}
\protect\centering{}\protect\caption{The geometric model of the UCyA.\label{fig:The-geometric-model}}
\protect
\end{figure}

As shown in Fig. \ref{fig:The-geometric-model},
a BS is equipped with an $M_{\textrm{bs}}$-antenna large-scale hybrid
mmWave UCyA consisting of $M_{\textrm{v}}$ vertically placed uniform
circular arrays (UCAs). Each of the UCAs is on a
horizontal plane with $M_{\textrm{h}}$ elements, and $M_{\textrm{bs}}=M_{\textrm{v}}M_{\textrm{h}}$.
Let $r$ be the radius of the UCyA, and $h$ be the vertical spacing
between any two adjacent vertical elements. We assume
that there are $K$ IoT devices, each equipped with a single antenna\footnote{The proposed technique can be readily applied when
multiple antennas are deployed at a device. In that case, the paths
originating from different antennas can be distinguished by transmitting
different pilot signals.}. Each device has a dominating path and different devices have separable and resolvable paths. Hence, $K$ signal paths are received by the BS.
The received signal sample at the $m_{\textrm{t}}$-th time frame
$(m_{\textrm{t}}=1,\ldots,M_{\textrm{t}})$ can be expressed as \cite{DoAModel}
\begin{align}
\mathbf{x}_{m_{\textrm{t}}}=\sum_{^{k=1}}^{K}s_{m_{\textrm{t}},k}\mathbf{B}^{H}\mathbf{a}_{\textrm{bs}}(\phi_{k},\theta_{k})+\mathbf{n}_{m_{\textrm{t}}},\label{eq:ori_signalmodel}
\end{align}
where $\phi_{k}$ and $\theta_{k}$ are the azimuth and elevation
DoAs of the $k$-th device, respectively; $\mathbf{a}_{\textrm{bs}}(\phi_{k},\theta_{k})\in\mathbb{C}^{M_{\textrm{bs}}\times1}$
denotes the steering vector of the hybrid UCyA; $s_{m_{\textrm{t}},k}$
is the received symbol of the $k$-th device at the $m_{\textrm{t}}$-th
time frame; $\mathbf{n}_{m_{\textrm{t}}}\in\mathbb{C}^{M_{\textrm{bs}}\times1}$
denotes the additive white Gaussian noise (AWGN); $\mathbf{B}\in\mathbb{C}^{M_{\textrm{bs}}\times M_{\textrm{bsd}}}$
is the hybrid beamforming matrix; and $M_{\textrm{bsd}}$
is the number of data streams.

Given the structure of the UCyA, the array steering vector $\mathbf{a}_{\textrm{bs}}(\phi_{k},\theta_{k})$
can be written as $\mathbf{a}_{\textrm{bs}}(\phi_{k},\theta_{k})=\mathbf{a}_{\textrm{v}}(\theta_{k})\otimes\mathbf{a}_{\textrm{h}}(\theta_{k},\phi_{k}),$
where $\mathbf{a}_{\textrm{v}}(\theta_{k})$ and $\mathbf{a}_{\textrm{h}}(\theta_{k},\phi_{k})$
are the vertical and horizontal array steering vectors with their
elements given by
\begin{align}
 & \left[\mathbf{a}_{\textrm{v}}(\theta_{k})\right]_{m_{\textrm{v}}}=a_{\textrm{v},m_{\textrm{v}}}(\theta_{k})\nonumber \\
 & \quad=\frac{1}{\sqrt{M_{\textrm{v}}}}\exp\left(-j\frac{2\pi}{\lambda}h(m_{\textrm{v}}-1)\cos(\theta_{k})\right),\label{eq:verticalarray}
\end{align}
\begin{align}
 & \left[\mathbf{a}_{\textrm{h}}(\theta_{k},\phi_{k})\right]_{m_{\textrm{h}}}=a_{\textrm{h},m_{\textrm{h}}}(\theta_{k},\phi_{k})\nonumber \\
 & \quad=\frac{1}{\sqrt{M_{\textrm{h}}}}\exp\left(j\frac{2\pi}{\lambda}r\sin(\theta_{k})\cos(\phi_{k}-\varphi_{m_{\textrm{h}}})\right),\label{eq:horizonarray}
\end{align}
where $\lambda$ is the wavelength, $m_{\textrm{v}}=1,\ldots,M_{\textrm{v}}$
and $m_{\textrm{h}}=1,\ldots,M_{\textrm{h}}$. $\varphi_{m_{\textrm{h}}}=2\pi(m_{\textrm{h}}-1)/M_{\textrm{h}}$
is the difference of the central angles between the $m_{\textrm{h}}$-th
antenna and the first antenna of each UCA. The geometric model of
the UCyA is shown in Fig. \ref{fig:The-geometric-model}. In this
paper, the antenna array can be reasonably treated as a phased array
because the signal bandwidth $B$ is much smaller than the carrier
frequency $f$, i.e. $B\ll f$\footnote{Much smaller is defined by $\left|(f\pm B)/f\right|\approx1$. When
$f=60$ GHz and $B\leq2$ GHz, it has $\left|(f\pm B)/f\right|\in[0.97,\:1.03]$.}, and the signals are narrowband.

\section{Proposed Nested 3D Hybrid UCyA }

In this section, we design the hybrid beamformer $\mathbf{B}$ for performing channel estimation.
$\mathbf{B}=\mathbf{B}_{\textrm{rf}}\mathbf{B}_{\textrm{bb}}\in\mathbb{C}^{M_{\textrm{bs}}\times M_{\textrm{bsd}}}$ can be decoupled between an analog beamforming matrix $\mathbf{B}_{\textrm{rf}}\in\mathbb{C}^{M_{\textrm{bs}}\times M_{\textrm{rf}}}$
and a digital beamforming matrix  $\mathbf{B}_{\textrm{bb}}\in\mathbb{C}^{M_{\textrm{rf}}\times M_{\textrm{bsd}}}$.
Here, $M_{\textrm{rf}}$ is the number of RF chains. We first briefly
review the concept of difference coarray and sparse array, which are
heavily used in this paper. Then, we introduce the $\mathbf{B}$ design
process in detail.

\subsection{Review of Sparse Arrays}

\noindent \textbf{Definition 1 (Difference Coarray): }For an antenna
array with $N$ elements, $\mathbf{w}_{n}$ is the position of its
$n$-th element, $n=1,2,\ldots,N$. Let $\mathbf{w}_{n}\in\mathbb{C}^{3\times1}$
denote the 3D coordinate of the $n$-th antenna array element. The locations of
all array elements are collected in the set $\mathbb{D}_{\textrm{a}}$,
i.e., $\mathbb{D}_{\textrm{a}}=\left\{\mathbf{w}_{n}\right\}$.  The
difference coarray of the antenna array is an (virtual) array with element positions given
by the set $\mathbb{D}_{\textrm{dc}}$:
\begin{figure*}
\begin{centering}
\includegraphics[width=15cm]{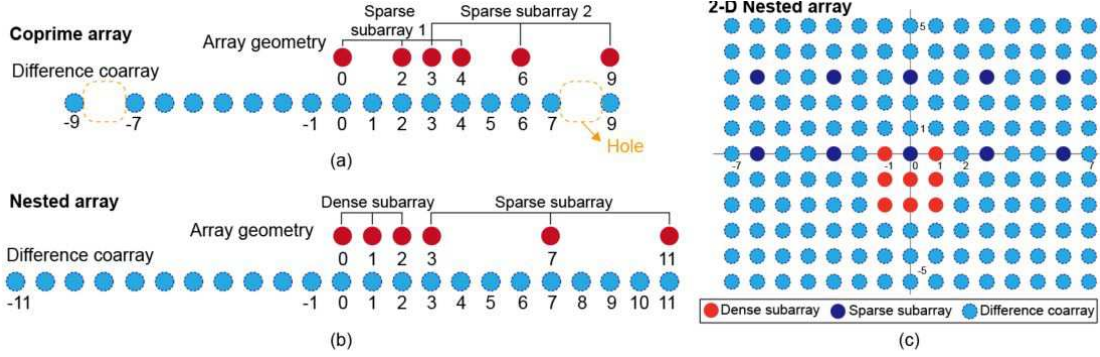}
\par\end{centering}
\centering{}\caption{An example of (a) coprime array, which is composed of two sparse subarrays:
one with $N=3$ elements and separation $M=2$, and another one with
$2M-1$ elements and separation $N$; (b) nested array, which is composed
of a dense subarray with $N_{1}=3$ elements and separation 1, and
a sparse subarray with $N_{2}=3$ elements and separation $N_{1}+1$;
and (c) 2-D nested array, which is composed of a $3\times3$ dense
subarray and a $5\times2$ sparse subarray. \label{fig:NestedCoprime}}
\end{figure*}
\begin{equation}
\mathbb{D}_{\textrm{dc}}=\left\{ \mathbf{w}_{n_{1}}-\mathbf{w}_{n_{2}}\right\} ,\;\forall n_{1},n_{2}=1,2,\ldots,N.\label{eq:DiffCoarray}
\end{equation}
According to \eqref{eq:DiffCoarray}, the element positions of the
difference coarray are the \textit{(self) differences} between the
locations of original physical antenna elements.

Based on the definition of difference coarray, we
can define a cross difference coarray, which corresponds to the cross
differences between the element locations of two arrays with $N$
and $M$ elements:
\begin{equation}
\mathbb{D}_{\textrm{cdc}}=\pm\left\{ \mathbf{w}_{n}-\mathbf{w}_{m}\right\} ,\;\forall n=1,2,\ldots,N,\:m=1,2,\ldots,M.\label{eq:crosscoarray}
\end{equation}
According to \eqref{eq:DiffCoarray} and \eqref{eq:crosscoarray},
we can see that the concept of difference coarray arises naturally
in the second-order statistics of the impinging signals.
For example, we consider that a signal $\mathbf{x}_{\mathbf{w}_{n}}\in\mathbb{C}^{N\times1}$
is received at the $n$-th element of an antenna array.
The
cross-correlation between the signals received at the $n_{1}$-th
and $n_{2}$-th elements of the array is given by
\begin{equation}
\textrm{E}\left\{ \mathbf{x}_{\mathbf{w}_{n_{1}}}\mathbf{x}_{\mathbf{w}_{n_{2}}}^{H}\right\} =\mathbf{R}_{(\mathbf{w}_{n_{1}}-\mathbf{w}_{n_{2}})}\in\mathbb{C}^{N\times N},\:\mathbf{w}_{n_{1}},\mathbf{w}_{n_{2}}\in\mathbb{D}_{\textrm{a}}\label{eq:DifCoarray}
\end{equation}
where $\mathbf{R}_{(\mathbf{w}_{n_{1}}-\mathbf{w}_{n_{2}})}$ can
be viewed as a signal sample received by a (larger) difference coarray
with virtual array elements located at $(\mathbf{w}_{n_{1}}-\mathbf{w}_{n_{2}})\in\mathbb{D}_{\textrm{dc}}$
\cite{Nested}.

By adequately designing the element locations, i.e., $\mathbb{D}_{\textrm{a}}$,
we can increase the number of virtual elements in the difference coarray
after computing the autocorrelation. If we use the samples from the
difference coarray to perform spectral estimation, the parameters
of much more targets can be estimated.

We proceed to introduce the concept of sparse array. An array is said
to be sparse if the spacing between a majority or all of adjacent
elements is more than one (half-wavelength) \cite{Nested,coprime}.
By applying the concept of sparse array to antenna
design, we can significantly improve the number of distinguishable  targets
using a small number of physical antenna elements \cite{Rebook}.
Some well-known 1-D sparse arrays include MRA \cite{MRA}, MHA \cite{MHA},
nested arrays \cite{Nested}, and coprime arrays \cite{coprime}.
With $O(N)$ physical array elements, both MRA and MHA can construct
difference coarrays with the size of $O(N^{2})$. However, their geometries
need to be constructed by using searching algorithms, e.g., integer
programming \cite{MHA,2DSPARSE}. Nested and coprime arrays were proposed
in \cite{Nested,coprime} with closed-form expressions for element
locations, and both of them can construct difference coarrays with
the same DoF as MRA and MHA. An example of nested
and coprime arrays, and their difference coarrays are shown in Figs.
\ref{fig:NestedCoprime}(a) and \ref{fig:NestedCoprime}(b). Nested
arrays can offer larger difference coarray DoF than coprime arrays,
as shown in Fig. \ref{fig:NestedCoprime}(b), where both of them have
six physical elements. In addition, the difference coarrays of nested
arrays consist of evenly spaced virtual elements with no holes, so
that the subspace-based estimation algorithms, such as MUSIC and ESPRIT,
can be utilized on the coarray domain without creating ambiguities
\cite{2DSPARSE}. For the details of these arrays, interested readers
can refer to \cite{MRA,MHA,Nested,coprime}.

In the next subsection, we design the phase shifter matrix $\mathbf{B}_{\textrm{ps}}$.
$\mathbf{B}_{\textrm{ps}}$ can transform the UCyA steering vectors
from the element space into a phase space, where the phases of the
array steering vectors are linear to the element locations. From \eqref{eq:DifCoarray},
we see that if we want to construct a difference coarray with a similar
geometry to that of the original array, e.g., the UCyA in our system,
the phase of the array steering vectors should vary linearly with
the element locations. However, due to the special geometry of the
UCyA, if we directly calculate the cross correlation of the array
steering vectors, it would generate a virtual non-UCyA composed of
multiple non-UCAs \cite{NestedUCA2}, leading to an increased computational
complexity and degraded the estimation accuracy.

\subsection{Phase-Space Transformation}

The analog beamforming matrix $\mathbf{B}_{\textrm{rf}}=\mathbf{B}_{\textrm{ps}}\mathbf{B}_{\textrm{rfc}}\in\mathbb{C}^{M_{\textrm{bs}}\times M_{\textrm{rf}}}$
is composed of a phase shifter matrix $\mathbf{B}_{\textrm{ps}}\in\mathbb{C}^{M_{\textrm{bs}}\times M_{\textrm{bsr}}}$
and an RF-chain connection matrix $\mathbf{B}_{\textrm{rfc}}\in\mathbb{C}^{M_{\textrm{bsr}}\times M_{\textrm{rf}}}$,
where $M_{\textrm{bsr}}$ is the number of output ports of the phase-shifter matrix. An illustration
of the RF front-end structure is shown in Fig. \ref{fig:The-block-diagram}.
Here, we design $\mathbf{B}_{\textrm{ps}}$ based on circular phase-space
transformation \cite{book11}, to transform the nonlinear phase of
UCyA steering vectors to be linear to the element locations.

\begin{figure}
\begin{centering}
\includegraphics[width=7cm]{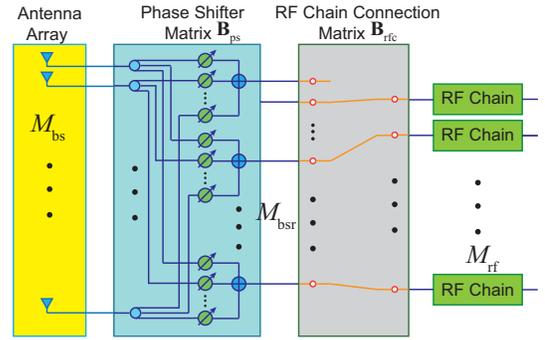}
\par\end{centering}
\centering{}\caption{The block diagram of RF front-end structure.\label{fig:The-block-diagram}}
\end{figure}

We decouple $\mathbf{B}_{\textrm{\textrm{ps}}}$ between the vertical
and horizontal planes, i.e., $\mathbf{B}_{\textrm{ps}}=\mathbf{B}_{\textrm{vps}}\otimes\mathbf{B}_{\textrm{hps}}$
with $\mathbf{B}_{\textrm{vps}}\in\mathbb{C}^{M_{\textrm{v}}\times M_{\textrm{vr}}}$
and $\mathbf{B}_{\textrm{hps}}\in\mathbb{C}^{M_{\textrm{h}}\times M_{\textrm{hr}}}$,
and thus $M_{\textrm{bsr}}=M_{\textrm{vr}}M_{\textrm{hr}}$, where
$M_{\textrm{vr}}$ and $M_{\textrm{hr}}$ are the
number of the phase-shifter output ports    along the vertical and horizontal directions,
respectively. According to the phase-space transformation of UCAs
\cite{book11}, we design $\mathbf{B}_{\textrm{hps}}$ as $\left[\mathbf{B}_{\textrm{hps}}\right]_{m_{\textrm{h}},m_{\textrm{hr}}+P+1}=e^{-j\frac{2\pi(m_{\textrm{h}}-1)}{M_{\textrm{h}}}m_{\textrm{hr}}},$
where $M_{\textrm{hr}}=2P+1$, $m_{\textrm{hr}}=-P,-P+1,\ldots,P$,
and $P$ is the highest phase-space dimension. Thus, the $M_{\textrm{h}}$-dimensional
array steering vector $\mathbf{a}_{\textrm{h}}(\theta_{k},\phi_{k})$
can be transformed into a $(2P+1)$-dimensional phase space, i.e.,
$\mathbf{a}_{\textrm{hps}}(\theta_{k},\phi_{k})=\mathbf{B}_{\textrm{hps}}^{H}\mathbf{a}_{\textrm{h}}(\theta_{k},\phi_{k})\in\mathbb{C}^{(2P+1)\times1}$.
The value of the highest phase-space dimension, $P$, can be configured
based on the following theorem.

\noindent \begin{theorem} Suppose that $M_{\textrm{h}}\geq\left\lfloor 4\pi r/\lambda\right\rfloor $.
If the highest phase-space dimension, $P$, is larger than $\left\lfloor 2\pi r/\lambda\right\rfloor $
and smaller than $M_{\textrm{hr}}/2$, then the
elements in the phase-space response can be approximated
by
\begin{align}
a_{\textrm{hps},p}(\theta_{k},\phi_{k})\approx\sqrt{M_{\textrm{h}}}j^{p}J_{p}\left(\gamma(\theta_{k})\right)\exp\left(-jp\phi_{k}\right),\label{eq:Theorem1}
\end{align}
where $\gamma(\theta_{k})=2\pi r\sin(\theta_{k})/\lambda$, $p=-P,-P+1,\ldots,P$,
and $J_{p}\left(\gamma(\theta_{k})\right)$ is the Bessel function
of the first kind of order $p$.

\noindent \end{theorem}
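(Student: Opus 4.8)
The plan is to evaluate the $p$-th entry of $\mathbf{a}_{\textrm{hps}}(\theta_{k},\phi_{k})=\mathbf{B}_{\textrm{hps}}^{H}\mathbf{a}_{\textrm{h}}(\theta_{k},\phi_{k})$ in closed form and then keep only its leading Bessel term. First I would substitute the definition of $\mathbf{B}_{\textrm{hps}}$ and the horizontal steering vector in \eqref{eq:horizonarray}, using $\varphi_{m_{\textrm{h}}}=2\pi(m_{\textrm{h}}-1)/M_{\textrm{h}}$, to obtain, for $p=-P,\ldots,P$,
\[
a_{\textrm{hps},p}(\theta_{k},\phi_{k})=\frac{1}{\sqrt{M_{\textrm{h}}}}\sum_{m_{\textrm{h}}=1}^{M_{\textrm{h}}}e^{jp\varphi_{m_{\textrm{h}}}}\,e^{j\gamma(\theta_{k})\cos(\phi_{k}-\varphi_{m_{\textrm{h}}})}.
\]
Next I would insert the Jacobi--Anger expansion $e^{jz\cos\alpha}=\sum_{n\in\mathbb{Z}}j^{n}J_{n}(z)e^{jn\alpha}$ with $z=\gamma(\theta_{k})$ and $\alpha=\phi_{k}-\varphi_{m_{\textrm{h}}}$. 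The resulting double sum is absolutely convergent (as $\sum_{n}|J_{n}(z)|<\infty$ and the $m_{\textrm{h}}$-sum is finite), so the summation order may be swapped and the inner sum over the $M_{\textrm{h}}$-th roots of unity, $\sum_{m_{\textrm{h}}=1}^{M_{\textrm{h}}}e^{j(p-n)\varphi_{m_{\textrm{h}}}}$, evaluated: it equals $M_{\textrm{h}}$ when $n\equiv p\pmod{M_{\textrm{h}}}$ and $0$ otherwise. This collapses the series to the aliased form $a_{\textrm{hps},p}(\theta_{k},\phi_{k})=\sqrt{M_{\textrm{h}}}\sum_{\ell\in\mathbb{Z}}j^{\,p+\ell M_{\textrm{h}}}J_{p+\ell M_{\textrm{h}}}(\gamma(\theta_{k}))\,e^{j(p+\ell M_{\textrm{h}})\phi_{k}}$, whose $\ell=0$ term is exactly the right-hand side of \eqref{eq:Theorem1} after the harmless relabeling $p\mapsto-p$ of the symmetric port index (using $J_{-p}=(-1)^{p}J_{p}$ and $j^{-p}=(-1)^{p}j^{p}$), which also fixes the sign of the azimuth phase.

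It then remains to show that the $\ell\neq 0$ terms are negligible, which is where the hypotheses enter. For $\ell\neq0$ and $|p|\le P$ the Bessel order obeys $|p+\ell M_{\textrm{h}}|\ge M_{\textrm{h}}-P$, whereas the argument is bounded uniformly by $|\gamma(\theta_{k})|=2\pi r|\sin\theta_{k}|/\lambda\le 2\pi r/\lambda$; the bounds $M_{\textrm{h}}\ge\lfloor4\pi r/\lambda\rfloor$ and $P<M_{\textrm{h}}/2$ force $M_{\textrm{h}}-P$ to be at least about $2\pi r/\lambda$, so every aliased term has order at or beyond its argument. Using the elementary bound $|J_{n}(z)|\le(|z|/2)^{|n|}/|n|!$ (real $z$), each further $|\ell|$ multiplies the bound by a factor far below one, so the tail $\sum_{\ell\neq0}$ is controlled by its first term $J_{M_{\textrm{h}}-P}(2\pi r/\lambda)$, which is super-exponentially small in $M_{\textrm{h}}$ and uniform in $(\theta_{k},\phi_{k})$. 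Relative to the $\ell=0$ term, which stays bounded away from zero on the retained band $|p|\le\lfloor 2\pi r/\lambda\rfloor$ that the condition $P>\lfloor 2\pi r/\lambda\rfloor$ guarantees is kept, the aliasing error is negligible, establishing \eqref{eq:Theorem1}.

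I expect the main obstacle to be this final step, i.e. making ``the aliasing tail is negligible'' quantitative: one needs a uniform-in-$\theta_{k}$ bound on $|J_{n}(z)|$ in the regime where the order exceeds the argument, and then a summation of that bound over $\ell\in\mathbb{Z}\setminus\{0\}$ and $|p|\le P$. The factorial in $|J_{n}(z)|\le(|z|/2)^{|n|}/|n|!$ makes the geometric-series comparison routine, but the delicate point is exactly how much $M_{\textrm{h}}-P$ must exceed $2\pi r/\lambda$ to meet a prescribed accuracy, and checking that the worst-case argument $2\pi r/\lambda$ is used throughout. The companion condition $P>\lfloor2\pi r/\lambda\rfloor$ is not needed for the per-entry approximation itself; it is recorded because it ensures that truncating the phase space to $2P+1$ output ports retains essentially all of the signal energy, so \eqref{eq:Theorem1} describes a near-lossless transformation rather than merely an identity for the kept entries.
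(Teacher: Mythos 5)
Your proof follows essentially the same route as the paper's Appendix I: expand $e^{j\gamma(\theta_{k})\cos(\phi_{k}-\varphi_{m_{\textrm{h}}})}$ by the Jacobi--Anger identity, collapse the finite sum over array elements using the orthogonality of the $M_{\textrm{h}}$-th roots of unity so that only Bessel orders congruent to $\pm p$ modulo $M_{\textrm{h}}$ survive, fix the sign of $p$ via $J_{-p}=(-1)^{p}J_{p}$ (the paper's step (c)), and discard the aliased terms (your $\ell\neq0$, the paper's $Q\neq0$) because their orders exceed their arguments. The single point of divergence is how that decay claim is supported: you invoke the elementary bound $\left|J_{n}(z)\right|\leq(|z|/2)^{|n|}/|n|!$, whereas the paper works from the integral representation $J_{v}(v\rho)=\frac{1}{\pi}\int_{0}^{\pi}\exp\left(-vF(\vartheta,\rho)\right)d\vartheta$, proves monotonicity in $\rho$ and $v$ to get $J_{v}(v\rho)<J_{1}(1)$, and then appeals to $\left|J_{v}(x)\right|\approx0$ for $|v|>|x|$. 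Your factorial bound is more elementary and makes the dependence on the margin $M_{\textrm{h}}-P-\gamma(\theta_{k})$ explicit; the quantitative gap you flag (how much the order must exceed the argument for a prescribed accuracy) is equally present, and equally unaddressed, in the paper's version. Your reading of the hypothesis $P>\left\lfloor 2\pi r/\lambda\right\rfloor$ --- that it ensures the retained band $|p|\leq P$ captures the non-negligible orders rather than being needed for the per-entry approximation --- is consistent with the paper's closing remark that for $|p|>P>\gamma(\theta_{k})$ the entire entry, not just the aliasing error, can be suppressed.
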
 \begin{proof} See Appendix I.\end{proof}

We set $\mathbf{B}_{\textrm{vps}}=\mathbf{I}_{M_{\textrm{v}}}$
to preserve the recurrence relations among UCAs. According to Theorem
1, the array steering vectors $\mathbf{a}_{\textrm{bs}}(\theta_{k},\phi_{k})$
after the hybrid beamformer is given by
\begin{align}
 & \mathbf{a}_{\textrm{bd}}(\theta_{k},\phi_{k})=\mathbf{B}^{H}\mathbf{a}_{\textrm{bs}}(\theta_{k},\phi_{k})\nonumber \\
 & =\left(\mathbf{B}_{\textrm{rf}}\mathbf{B}_{\textrm{bb}}\right)^{H}\mathbf{a}_{\textrm{bs}}(\theta_{k},\phi_{k})\nonumber \\
 & =\left((\mathbf{B}_{\textrm{vps}}\otimes\mathbf{B}_{\textrm{hps}})\mathbf{B}_{\textrm{rfc}}\mathbf{I}_{M_{\textrm{bsr}}}\right)^{H}\mathbf{a}_{\textrm{bs}}(\theta_{k},\phi_{k})\nonumber \\
 & =\mathbf{B}_{\textrm{rfc}}^{H}\left(\mathbf{B}_{\textrm{vps}}\otimes\mathbf{B}_{\textrm{hps}}\right)^{H}\mathbf{a}_{\textrm{bs}}(\theta_{k},\phi_{k}),\label{eq:AFTERBF1}
\end{align}
where $\mathbf{B}_{\textrm{bb}}$ is a diagonal matrix used to guarantee
the power constraint \cite{Largescaleantenna}. Without loss of generality,
we set $\mathbf{B}_{\textrm{bb}}=\mathbf{I}_{M_{\textrm{bsr}}}$ in
this paper. According to two properties of the Khatri-Rao product:
$(\mathbf{A}\otimes\mathbf{B})^{H}=\mathbf{A}^{H}\otimes\mathbf{B}^{H}$
and $(\mathbf{A}\otimes\mathbf{B})(\mathbf{C}\otimes\mathbf{D})=\mathbf{AC}\otimes\mathbf{BD}$
\cite{Arraysignalprocessingconceptsandtechniques}, \eqref{eq:AFTERBF1}
can be rewritten as
\begin{align}
 & \mathbf{a}_{\textrm{bd}}(\theta_{k},\phi_{k})\nonumber \\
 & =\mathbf{B}_{\textrm{rfc}}^{H}\left(\mathbf{B}_{\textrm{vps}}^{H}\otimes\mathbf{B}_{\textrm{hps}}^{H}\right)\left(\mathbf{a}_{\textrm{v}}(\theta_{k})\otimes\mathbf{a}_{\textrm{h}}(\theta_{k},\phi_{k})\right)\nonumber \\
 & =\mathbf{B}_{\textrm{rfc}}^{H}\left[\left(\mathbf{B}_{\textrm{vps}}^{H}\mathbf{a}_{\textrm{v}}(\theta_{k})\right)\otimes\left(\mathbf{B}_{\textrm{hps}}^{H}\mathbf{a}_{\textrm{h}}(\theta_{k},\phi_{k})\right)\right]\nonumber \\
 & =\mathbf{B}_{\textrm{rfc}}^{H}\left[\mathbf{a}_{\textrm{vps}}(\theta_{k})\otimes\mathbf{a}_{\textrm{hps}}(\theta_{k},\phi_{k})\right],\label{eq:afterBF}
\end{align}
where $\mathbf{a}_{\textrm{vps}}(\theta_{k})\in\mathbb{C}^{M_{\textrm{vr}}\times1}$,
$\mathbf{a}_{\textrm{hps}}(\theta_{k},\phi_{k})\in\mathbb{C}^{M_{\textrm{hr}}\times1}$,
$M_{\textrm{vr}}=M_{\textrm{v}}$, and $M_{\textrm{hr}}=2P+1$.

According to Theorem 1, we have
\begin{align}
\mathbf{a}_{\textrm{vps}}(\theta_{k})=\mathbf{a}_{\textrm{v}}(\theta_{k})=\mathbf{I}_{M_{\textrm{v}}}\mathbf{a}_{\textrm{v}}(\theta_{k}),\label{eq:Avhb}
\end{align}
\begin{align}
 & a_{\textrm{hps},m_{\textrm{hr}}}(\theta_{k},\phi_{k})=\left[\mathbf{B}_{\textrm{hps}}^{H}\right]_{m_{\textrm{hr}}+P+1,:}\mathbf{a}_{\textrm{h}}(\theta_{k},\phi_{k})\nonumber \\
 & \approx\sqrt{M_{\textrm{h}}}j^{m_{\textrm{hr}}}J_{m_{\textrm{hr}}}\left(\gamma(\theta_{k})\right)\exp\left(-jm_{\textrm{hr}}\phi_{k}\right).\label{eq:Ahhb}
\end{align}

From \eqref{eq:Avhb} and \eqref{eq:Ahhb}, we see that, through the
proposed $\mathbf{B}_{\textrm{ps}}$, the phases of the array steering
vectors become linear to the element locations. This is important
to exploit the property of the sparse array theory to design the RF-chain
connection matrix $\mathbf{B}_{\textrm{rfc}}$.

\subsection{RF-Chain Connection Network Design}

In this subsection, we apply the sparse array technique to design
$\mathbf{B}_{\textrm{rfc}}$, which enables the DoAs
  of a large number of devices to be estimated with a marginal accuracy loss
while significantly reducing the number of required RF chains. We
aim to use as few RF chains as possible to achieve the same, or even
larger, DoF than the fully connected beamforming array\footnote{Due to the use of phase shifter  network, the antenna DoF of UCyA depends
on the scale of the phase shifter  network. Thus, if we use a fully connected
beamforming array, $(2P+1)M_{\textrm{v}}$ RF chains are needed, which
can provide $\mathcal{O}(PM_{\textrm{v}})$ DoFs. }. This objective is different from the previous sparse array researches,
which have typically focused on maximizing the size of difference
coarrays under the constraint of a fixed number of physical antenna
elements.

We first flatten the 3-D RF-chain connection network of UCyA into
a 2-D plane, as shown in Fig. \ref{fig:3DUCyAto2D}, by disjoining
the RF-chain connection network at the first column phase shifters of
every UCA. Different from typical 2-D arrays, due to the periodicity
of UCAs, the first and the last  phase-shifter output ports of every row in the flattened
2-D RF-chain connection network are identical, as shown in Fig. \ref{fig:3DUCyAto2D},
where the dotted circles denote the last-column phase shifters.
\begin{figure}
\begin{centering}
\includegraphics[width=7cm]{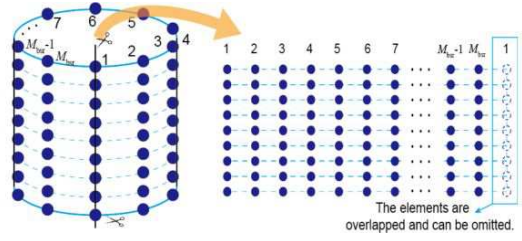}
\par\end{centering}
\centering{}\caption{An illustration of unfolding phase-shifter output ports of a 3-D UCyA to be a 2-D array.\label{fig:3DUCyAto2D}}
\end{figure}

After the flattening processing, the 3-D RF-chain
connection network becomes a quasi-2D rectangular array with size
of $(M_{\textrm{hr}}+1)\times M_{\textrm{vr}}$, where the increased dimension is due to the repeated phase-shifter output ports, as shown in  Fig. \ref{fig:3DUCyAto2D}. The idea of 2-D
sparse arrays can be applied to design a 3-D RF-chain connection network
of UCyA\footnote{Although the RF-chain connection network actually does not have the
exact shape, according to the array steering vectors in \eqref{eq:Avhb}
and \eqref{eq:Ahhb}, we can also regard the network as an UCyA.}. In this paper, we design the RF-chain connection network based on
a 2-D nested array. This is because (1) a nested array can generate
larger hole-free difference coarrays than a coprime array under the
same setting, as discussed in Section III-A; and (2) it has simple
closed-form expressions for a large number of elements, which cannot
be achieved in MRA and MHA. There are also some other frequently-used
2-D sparse arrays, e.g., hourglass arrays and open box arrays (OBAs)
\cite{2DSPARSE,OBA}. We will compare the RF-chain connection networks
designed based on those array geometries with our design in Section V-C.

Our proposed sparse RF-chain connection  network is developed from
the ``Configuration II'' nested array \cite{Nested2D1}. In the
general ``Configuration II'' nested array, when there are $N_{\textrm{dense}}=N_{\textrm{vd}}N_{\textrm{hd}}-1$
and $N_{\textrm{sparse}}=N_{\textrm{vs}}N_{\textrm{hs}}$ elements
in the dense and sparse subarrays, respectively, the constructed hole-free
difference coarray has $N_{\textrm{dc}}=N_{\textrm{vdc}}N_{\textrm{hdc}}=(2N_{\textrm{vd}}N_{\textrm{vs}}-1)N_{\textrm{hd}}N_{\textrm{hs}}$
elements \cite{Nested2D1}, as shown in Fig. \ref{fig:NestedCoprime}(c).
Here,  $N_{\textrm{vd}}$ and $N_{\textrm{hd}}$ are
the numbers of elements in the dense subarray along the vertical and
horizontal directions, respectively; $N_{\textrm{vs}}$ and $N_{\textrm{hs}}$
are the numbers of elements in the sparse subarray along the vertical
and horizontal directions, respectively; and $N_{\textrm{vdc}}$ and
$N_{\textrm{hdc}}$ are the numbers of elements in the difference
coarray along the vertical and horizontal directions, respectively.
We wish to find the distribution of the RF chains between the sparse
and the dense arrays that use as few RF chains as possible to achieve
the same DoF as the fully connected beamforming array.
\begin{figure*}
\begin{centering}
\includegraphics[width=16cm]{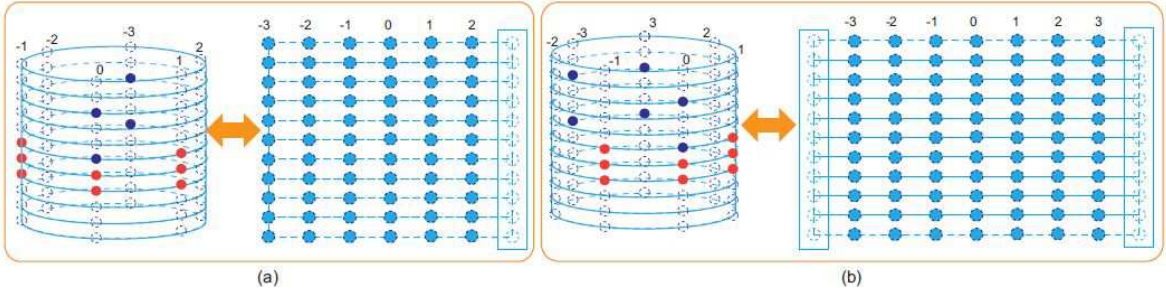}
\par\end{centering}
\centering{}\caption{Two cases of 3-D UCyA unfolding. The locations of the cylindrical
post-phase-shifting ports in sparse and dense subarrays are highlighted
with red and navy blue dots. \label{fig:Two-cases-of}}
\end{figure*}

Due to the above-mentioned periodicity of UCAs, when we apply the
sparse array technique into our hybrid front-end design, two cases
need to be considered for UCyAs. Fig. \ref{fig:Two-cases-of}(a) shows
the first case, where the first and the last columns of the difference
coarray overlap. From Fig. \ref{fig:Two-cases-of}(a), we can see
that because one column (first or last) of the difference coarray
is redundant, two elements of the sparse subarray can be omitted to
reduce the element number. A drawback is that this case requires $M_{\textrm{hr}}=(N_{\textrm{hs}}-1)N_{\textrm{hd}}$,
which would impose a strict requirement on the number
of phase shifters on the horizontal plane. The second case is shown
in Fig. \ref{fig:Two-cases-of}(b), which requires the constructed
difference coarray to be larger than the original UCyA, to achieve
the same DoF as a fully connected beamforming array on the horizontal
space. In this case, $N_{\textrm{hd}}N_{\textrm{hs}}\geq M_{\textrm{hr}}>(N_{\textrm{hs}}-1)N_{\textrm{hd}}/2$.

In our system, due to the new phase shifter  network designed in Section
III-B, we have $M_{\textrm{hr}}=2P+1$ (which is an odd number). Since
in the Configuration II nested array, the dense and sparse subarrays
are symmetric, i.e., both $N_{\textrm{hs}}$ and $N_{\textrm{hd}}$
are odd, we have $(N_{\textrm{hs}}-1)N_{\textrm{hd}}$ is even, and
only the above-mentioned second case needs to be considered in our
system. We formulate the optimization problem as
\begin{align}
\min_{N_{\textrm{vd}},N_{\textrm{hd}},N_{\textrm{vs}},N_{\textrm{hs}}\in\mathbb{Z}^{+}}M_{\textrm{rf}} & =N_{\textrm{vd}}N_{\textrm{hd}}+N_{\textrm{vs}}N_{\textrm{hs}}-1\label{eq:ops}\\
\textrm{s.t. } & \textrm{(C1): }2N_{\textrm{vd}}N_{\textrm{vs}}-1\geq M_{\textrm{vr}},\nonumber \\
 & \textrm{(C2): }N_{\textrm{hd}}N_{\textrm{hs}}\geq M_{\textrm{hr}},\nonumber \\
 & \textrm{(C3): }N_{\textrm{vd}},N_{\textrm{hd}}>1,\nonumber \\
 & \textrm{(C4): }N_{\textrm{hd}}\,\textrm{is odd and }N_{\textrm{vd}}/N_{\textrm{hd}}\in\mathbb{Z}^{+}.\nonumber
\end{align}
(C1) and (C2) guarantee the DoF requirements for the constructed difference
coarray of the RF-chain connection network. (C3) avoids solutions
that degenerate to 1-D arrays. (C4) is due to the fact that the dense
array in Configuration II is symmetrical, and $N_{\textrm{hd}}$ and
$N_{\textrm{vd}}$ are invariant factors \cite{Nested2D2} of the
array distribution matrix.

The solution for \eqref{eq:ops} can be obtained by adopting the following
strategy. According to (C1) and (C2), since $N_{\textrm{vs}},N_{\textrm{hs}}\in\mathbb{Z}^{+}$,
we can obtain $N_{\textrm{vs}}=\left\lceil (M_{\textrm{vr}}-1)/2N_{\textrm{vd}}\right\rceil $
and $N_{\textrm{hs}}=\left\lceil M_{\textrm{hr}}/N_{\textrm{hd}}\right\rceil $.
The optimization problem \eqref{eq:ops} becomes
\begin{align}
\min_{N_{\textrm{vd}},N_{\textrm{hd}}\in\mathbb{Z}^{+}}M_{\textrm{rf}} & =N_{\textrm{vd}}N_{\textrm{hd}}+\left\lceil \frac{(M_{\textrm{vr}}-1)/2}{N_{\textrm{vd}}}\right\rceil \left\lceil \frac{M_{\textrm{hr}}}{N_{\textrm{hd}}}\right\rceil \label{eq:ops-1}\\
\textrm{s.t. } & \textrm{(C3) and (C4)}\nonumber
\end{align}
Given $M_{\textrm{vr}}$ and $M_{\textrm{hr}}$, we see that \eqref{eq:ops-1}
exhibits the form of $y=x+\frac{a}{x}$, where $a>0$ is a constant
and $y=x+\frac{a}{x}\geq2\sqrt{a}$. Because $y=2\sqrt{a}$
iff $x=\frac{a}{x}$, the minimum $M_{\textrm{rf}}$ can be obtained
when the difference between $N_{\textrm{vd}}N_{\textrm{hd}}$ and
$\left\lceil (M_{\textrm{vr}}-1)/2N_{\textrm{vd}}\right\rceil \left\lceil M_{\textrm{hr}}/N_{\textrm{hd}}\right\rceil $
is the smallest. Since $N_{\textrm{vd}},N_{\textrm{hd}},N_{\textrm{vs}},N_{\textrm{hs}}\in\mathbb{Z}^{+}$,
we can determine the approximate value ranges of $N_{\textrm{vd}}N_{\textrm{hd}}$
and $\left\lceil (M_{\textrm{vr}}-1)/2N_{\textrm{vd}}\right\rceil \left\lceil M_{\textrm{hr}}/N_{\textrm{hd}}\right\rceil $,
and \eqref{eq:ops-1} is an integer programming problem. According
to (C3) and (C4), the optimal solutions of $N_{\textrm{vd}},N_{\textrm{hd}},N_{\textrm{vs}},$
and $N_{\textrm{hs}}$ to \eqref{eq:ops-1} can be obtained by using
brute-force search with the value range between $N_{\textrm{vd}}N_{\textrm{hd}}$
and $\left\lceil (M_{\textrm{vr}}-1)/2N_{\textrm{vd}}\right\rceil \left\lceil M_{\textrm{hr}}/N_{\textrm{hd}}\right\rceil $.

In the proposed sparse RF-chain connection network, the RF chains
only need to connect the phase shifters located in the dense and sparse
subarrays. Based on the calculated values of $N_{\textrm{vd}},N_{\textrm{hd}},N_{\textrm{vs}},$
and $N_{\textrm{hs}}$, now we provide the element locations in the
dense and sparse arrays. For illustration convenience, we define the
overlapping point of the sparse and dense arrays as the origin of
the nested array\footnote{Because the parameter estimation depends on the difference between
array elements, the changed absolute positions of array elements does
not effect the estimation performance.}. Let $\mathbf{m}_{\textrm{sp}}=(m_{\textrm{v\_sp}},m_{\textrm{h\_sp}})$
and $\mathbf{m}_{\textrm{de}}=(m_{\textrm{v\_de}},m_{\textrm{h\_de}})$
as the locations of elements in the dense and sparse arrays (to which
the RF chains connect), respectively. We have
\begin{equation}
\begin{cases}
m_{\textrm{v\_sp}}=N_{\textrm{vd}}(n_{\textrm{vs}}-1),\\
m_{\textrm{h\_sp}}=N_{\textrm{hd}}(-N_{\textrm{hs}}/2+n_{\textrm{hs}}-1/2),
\end{cases}\label{eq:sparsearray}
\end{equation}
\begin{equation}
\begin{cases}
m_{\textrm{v\_de}}=-N_{\textrm{vd}}+n_{\textrm{vd}},\\
m_{\textrm{h\_de}}=-(N_{\textrm{hd}}-1)/2+n_{\textrm{hd}}-1,
\end{cases}\label{eq:densearray}
\end{equation}
where $n_{\textrm{vs}}=1,2,\ldots,N_{\textrm{vs}}$; $n_{\textrm{hs}}=1,2,\ldots,N_{\textrm{hs}}$;
$n_{\textrm{vd}}=1,2,\ldots,N_{\textrm{vd}}$; and $n_{\textrm{hd}}=1,2,\ldots,N_{\textrm{hd}}$.
Let $\mathbb{M}_{\textrm{rf\_d}}=\left\{ \mathbf{m}_{\textrm{v\_de}}\otimes\mathbf{m}_{\textrm{h\_de}}\right\} $
and $\mathbb{M}_{\textrm{rf\_s}}=\left\{ \mathbf{m}_{\textrm{v\_sp}}\otimes\mathbf{m}_{\textrm{h\_sp}}\right\} $
denote the sets of the RF-chain connection points in the dense and
sparse arrays, respectively. $\mathbf{m}_{\textrm{v\_de}}\in\mathbb{C}^{N_{\textrm{vd}}\times1}$
and $\mathbf{m}_{\textrm{h\_de}}\in\mathbb{C}^{N_{\textrm{hd}}\times1}$
are the element locations of the dense array along the vertical and
horizontal directions, respectively. $\mathbf{m}_{\textrm{v\_sp}}\in\mathbb{C}^{N_{\textrm{vs}}\times1}$
and $\mathbf{m}_{\textrm{h\_sp}}\in\mathbb{C}^{N_{\textrm{hs}}\times1}$
are the element locations of the sparse array along the vertical and
horizontal directions, respectively. The set of RF-chain connection
points is $\mathbb{M}_{\textrm{rfc}}=\left\{ \mathbb{M}_{\textrm{rf\_d}}\cup\mathbb{M}_{\textrm{rf\_s}}\right\} .$
The constructed RF-chain connection matrix $\mathbf{B}_{\textrm{rfc}}$
is given by
\begin{align}
 & \left[\mathbf{B}_{\textrm{rfc}}\right]_{m_{\textrm{bsr}},m_{\textrm{rf}}}=\nonumber \\
 & \begin{cases}
1, & \textrm{if }m_{\textrm{bsr}}\in\mathbb{M}_{\textrm{rfc}}\textrm{ and }\\
 & \left[\mathbf{B}_{\textrm{rfc}}\right]_{m'_{\textrm{bsr}}\neq m_{\textrm{bsr}},m_{\textrm{rf}}}=\left[\mathbf{B}_{\textrm{rfc}}\right]_{m_{\textrm{bsr}},m'_{\textrm{rf}}\neq m_{\textrm{rf}}}=0;\\
0, & \textrm{otherwise.}
\end{cases}
\end{align}

By deploying the proposed sparse RF-chain connection network and using
the second-order statistics of the received signal for channel estimation,
the DoF of the proposed network is $O((2N_{\textrm{vd}}N_{\textrm{vs}}-1)\times M_{\textrm{hr}})$.
In other words, according to \cite{Nested2D2}, we can estimate the
channel parameters of $(N_{\textrm{vdc}}-1)(N_{\textrm{hdc}}-1)$
devices by using the proposed hybrid front-end. Because there is an
overlapping point at the origin, the total number of RF chains required
in our network is $M_{\textrm{rf}}=N_{\textrm{vd}}N_{\textrm{hd}}+N_{\textrm{vs}}N_{\textrm{hs}}-1$,
as shown in \eqref{eq:ops}\footnote{It can also be proved that the phase shifter at this location is useless
and does not need to be connected \cite{Nested2D1}. However, for
ease of description, here we assume that this phase shifter is connected
in our network, which does not affect the results in this paper.}. Due to the periodicity of UCAs, there are only
up to $N_{\textrm{hdc}}=M_{\textrm{hr}}$ virtual elements on the
horizontal plane of the constructed 3-D difference coarray of the
RF-chain connection network. Along the vertical direction, the number
of virtual elements is $N_{\textrm{vdc}}=2N_{\textrm{vd}}N_{\textrm{vs}}-1$.

According to the element locations of the dense and sparse arrays
in \eqref{eq:sparsearray} and \eqref{eq:densearray}, we also provide
the element locations of the constructed difference coarray. Let $\mathbf{m}_{\textrm{dc}}=(m_{\textrm{v\_dc}},m_{\textrm{h\_dc}})$,
where $m_{\textrm{v\_dc}}$ and $m_{\textrm{h\_dc}}$ correspond to
the locations along the vertical and horizontal directions, respectively.
We have
\begin{align}
m_{\textrm{v\_dc}} & =-(N_{\textrm{vdc}}-1)/2+n_{\textrm{vdc}}-1=-N_{\textrm{vd}}N_{\textrm{vs}}+n_{\textrm{vdc}},
\end{align}
\begin{align}
m_{\textrm{h\_dc}} & =-(N_{\textrm{hdc}}-1)/2+n_{\textrm{hdc}}-1=-P+n_{\textrm{hdc}}-1,
\end{align}
where $n_{\textrm{vdc}}=1,2,\ldots,N_{\textrm{vdc}}$ and $n_{\textrm{hdc}}=1,2,\ldots,N_{\textrm{hdc}}$.
The shape of the constructed 3-D difference coarray RF-chain connection
network is the same as the UCyA, but the former has a larger DoF.

The signals through the proposed RF-chain connection network are given
by
\begin{equation}
\mathbf{x}_{\textrm{sn},m_{\textrm{t}}}=\sum_{^{k=1}}^{K}s_{m_{\textrm{t}},k}\mathbf{a}_{\textrm{sn}}(\phi_{k},\theta_{k})+\mathbf{n}_{\textrm{sn},m_{\textrm{t}}},\label{eq:signals_byPSN}
\end{equation}
where
\begin{align*}
 & \mathbf{a}_{\textrm{sn}}(\phi_{k},\theta_{k})\\
 & =\left[\begin{array}{c}
\mathbf{a}_{\textrm{sn,s}}(\phi_{k},\theta_{k})\\
\mathbf{a}_{\textrm{sn,d}}(\phi_{k},\theta_{k})
\end{array}\right]=\left[\begin{array}{c}
\mathbf{a}_{\textrm{\textrm{sn,s}v}}(\theta_{k})\otimes\mathbf{a}_{\textrm{\textrm{sn,s}h}}(\theta_{k},\phi_{k})\\
\mathbf{a}_{\textrm{\textrm{sn,d}v}}(\theta_{k})\otimes\mathbf{a}_{\textrm{\textrm{sn,d}h}}(\theta_{k},\phi_{k})
\end{array}\right].
\end{align*}
The elements of $\mathbf{a}_{\textrm{\textrm{sn,s}v}}(\theta_{k})\in\mathbb{C}^{N_{\textrm{vs}}\times1}$
and $\mathbf{a}_{\textrm{\textrm{sn,s}h}}(\theta_{k},\phi_{k})\in$ $\mathbb{C}^{N_{\textrm{hs}}\times1}$
are $a_{\textrm{\textrm{sn,s}v},n_{\textrm{vs}}}(\theta_{k})=a_{\textrm{vs,}m_{\textrm{v\_sp}}}(\theta_{k})$
and $a_{\textrm{\textrm{sn,s}h},n_{\textrm{hs}}}(\theta_{k},\phi_{k})=$ $a_{\textrm{hs},m_{\textrm{h\_sp}}}(\theta_{k},\phi_{k})$,
respectively, where $a_{\textrm{vs,}m_{\textrm{v\_sp}}}(\theta_{k})$
and $a_{\textrm{hs},m_{\textrm{h\_sp}}}(\theta_{k},\phi_{k})$ are
the array steering vectors of the sparse subarray along the vertical
and horizontal directions, respectively. The elements of the array
steering vectors of the dense subarray, i.e., $\mathbf{a}_{\textrm{\textrm{sn,d}v}}(\theta_{k})\in\mathbb{C}^{N_{\textrm{vd}}\times1}$
and $\mathbf{a}_{\textrm{\textrm{sn,d}h}}(\theta_{k},\phi_{k})\in\mathbb{C}^{N_{\textrm{hd}}\times1}$,
can be written in the same way. Here, $\mathbf{n}_{\textrm{sn},m_{\textrm{t}}}\in\mathbb{C}^{M_{\textrm{rf}}\times1}$
is the noise component through the RF-chain connection network.

The signal model \eqref{eq:signals_byPSN} can also be rewritten as
\begin{equation}
\mathbf{x}_{\textrm{sn},m_{\textrm{t}}}=\mathbf{A}_{\textrm{sn}}\mathbf{s}_{m_{\textrm{t}}}+\mathbf{n}_{\textrm{sn},m_{\textrm{t}}},\label{eq:signals_byPSN-1}
\end{equation}
where $\mathbf{A}_{\textrm{sn}}=[\mathbf{a}_{\textrm{sn}}(\phi_{1},\theta_{1}),\mathbf{a}_{\textrm{sn}}(\phi_{2},\theta_{2}),\ldots,\mathbf{a}_{\textrm{sn}}(\phi_{K},\theta_{K})]\in\mathbb{C}^{M_{\textrm{rf}}\times K}$
and $\mathbf{s}_{m_{\textrm{t}}}=[s_{m_{\textrm{t}},1},s_{m_{\textrm{t}},2},\ldots,s_{m_{\textrm{t}},K}]^{T}\in\mathbb{C}^{K\times1}$.

By calculating the autocorrelation of $\mathbf{x}_{\textrm{sn},m_{\textrm{t}}}$,
we have
\begin{equation}
\mathbf{R}_{\textrm{sn,}m_{\textrm{t}}}=\textrm{E}\left\{ \mathbf{x}_{\textrm{sn},m_{\textrm{t}}}\mathbf{x}_{\textrm{sn},m_{\textrm{t}}}^{H}\right\} =\mathbf{A}_{\textrm{sn}}\mathbf{R}_{\textrm{ss},m_{\textrm{t}}}\mathbf{A}_{\textrm{sn}}^{H}+\mathbf{R}_{\textrm{nn},m_{\textrm{t}}},\label{eq:Calculatingautocorrelation}
\end{equation}
where $\mathbf{R}_{\textrm{ss},m_{\textrm{t}}}=\textrm{diag}\left(\mathbf{\sigma}_{\textrm{s},m_{\textrm{t}},1}^{2},\ldots,\mathbf{\sigma}_{\textrm{s},m_{\textrm{t}},K}^{2}\right)$
and $\mathbf{R}_{\textrm{nn},m_{\textrm{t}}}=\textrm{diag}\left(\mathbf{\sigma}_{\textrm{n},m_{\textrm{t}},1}^{2},\ldots,\mathbf{\sigma}_{\textrm{n},m_{\textrm{t}},K}^{2}\right)$
are the autocorrelation matrices of $\mathbf{s}_{m_{\textrm{t}}}$
and $\mathbf{n}_{\textrm{sn},m_{\textrm{t}}}$, respectively.

We vectorize $\mathbf{R}_{\textrm{sn,}m_{\textrm{t}}}$ as
\begin{equation}
\mathbf{\mathbf{y}_{\textrm{vR\textrm{,}\ensuremath{m_{\textrm{t}}}}}}=\textrm{vec}(\mathbf{R}_{\textrm{sn,}m_{\textrm{t}}})=\left[\mathbf{A}_{\textrm{sn}}^{*}\diamond\mathbf{A}_{\textrm{sn}}\right]\mathbf{d}_{m_{\textrm{t}}}+\textrm{vec}(\mathbf{R}_{\textrm{nn},m_{\textrm{t}}}),\label{eq:VEC}
\end{equation}
where $\left[\mathbf{d}_{m_{\textrm{t}}}\right]_{k,1}=\mathbf{\sigma}_{\textrm{s},m_{\textrm{t}},k}^{2}$,
and $\mathbf{\sigma}_{\textrm{s},m_{\textrm{t}},k}^{2}$ is the power
of the $k$-th signal. The $k$-th column of the
matrix $\left[\mathbf{A}_{\textrm{sn}}^{*}\diamond\mathbf{A}_{\textrm{sn}}\right]$
contains elements representing the cross-differences between sparse
and dense subarrays,  i.e., $a_{\textrm{sn,s},\mathbf{m}_{\textrm{sp}}}^{*}(\phi_{k},\theta_{k})a_{\textrm{sn,d},\mathbf{m}_{\textrm{de}}}(\phi_{k},\theta_{k})$
and $a_{\textrm{sn,s},\mathbf{m}_{\textrm{de}}}^{*}(\phi_{k},\theta_{k})a_{\textrm{sn,d},\mathbf{m}_{\textrm{sp}}}(\phi_{k},\theta_{k}),$
and the self-differences of sparse and dense subarrays, i.e., $a_{\textrm{sn,s},\mathbf{m}_{\textrm{sp},1}}^{*}(\phi_{k},\theta_{k})a_{\textrm{sn,s},\mathbf{m}_{\textrm{sp},2}}(\phi_{k},\theta_{k})$
and $a_{\textrm{sn,d},\mathbf{m}_{\textrm{de,1}}}^{*}(\phi_{k},\theta_{k})a_{\textrm{sn,d},\mathbf{m}_{\textrm{de,2}}}(\phi_{k},\theta_{k})$.
Here, $\mathbf{m}_{\textrm{sp},1}$ and $\mathbf{m}_{\textrm{sp},2}$
denote that $a_{\textrm{sn,s},\mathbf{m}_{\textrm{sp},1}}(\phi_{k},\theta_{k})$
and $a_{\textrm{sn,s},\mathbf{m}_{\textrm{sp},2}}(\phi_{k},\theta_{k})$
are different elements in the sparse subarray, and $\mathbf{m}_{\textrm{de,1}}$
and $\mathbf{m}_{\textrm{de,2}}$ denote that $a_{\textrm{sn,d},\mathbf{m}_{\textrm{de,1}}}(\phi_{k},\theta_{k})$
and $a_{\textrm{sn,d},\mathbf{m}_{\textrm{de,2}}}(\phi_{k},\theta_{k})$
are different elements in the dense subarray.

We sort the rows of $\mathbf{\mathbf{y}_{\textrm{vR\textrm{,}\ensuremath{m_{\textrm{t}}}}}}$
in the ascending order of their phases, and then remove the redundant
rows with the same phases. Then, we can obtain the array steering
vector of the difference coarray $\mathbf{A}_{\textrm{df}}\in\mathbb{C}^{N_{\textrm{vdc}}N_{\textrm{hdc}}\times1}$
from $\left[\mathbf{A}_{\textrm{sn}}^{*}\diamond\mathbf{A}_{\textrm{sn}}\right]$.
We also calculate and store the mean of the ``nonzero'' rows of
$\textrm{vec}(\mathbf{R}_{\textrm{nn},m_{\textrm{t}}})$\footnote{All the rows of $\textrm{vec}(\mathbf{R}_{\textrm{nn},m_{\textrm{t}}})$
with nonzero value correspond to the phase difference of 0 in the
different coarray, which are produced by the self difference of sparse
and dense subarrays. Because the noise is temporally and spatially
white with power $\mathbf{\sigma}_{\textrm{n}}^{2}$, by averaging
the value of these rows, we have $\mathbf{\sigma}_{\textrm{n}}^{2}=\sum_{k=1}^{K}\mathbf{\sigma}_{\textrm{n},m_{\textrm{t}},k}^{2}$.}, and obtain
\begin{equation}
\mathbf{\mathbf{y}_{\textrm{df\textrm{,}\ensuremath{m_{\textrm{t}}}}}}=\mathbf{A}_{\textrm{df}}\mathbf{d}_{m_{\textrm{t}}}+\mathbf{\sigma}_{\textrm{n}}^{2}\mathbf{e}_{\textrm{df}},\label{eq:dfmodel}
\end{equation}
where $\left[\mathbf{A}_{\textrm{df}}\right]_{:,k}=\mathbf{a}_{\textrm{df}}(\phi_{k},\theta_{k})=\mathbf{a}_{\textrm{\textrm{df}v}}(\theta_{k})\otimes\mathbf{a}_{\textrm{\textrm{df}h}}(\theta_{k},\phi_{k})$
and $\mathbf{e}_{\textrm{df}}\in\mathbb{C}^{N_{\textrm{vdc}}N_{\textrm{hdc}}\times1}$
is a vector of all zeros except a ``1'' at the $(P+1)N_{\textrm{vd}}N_{\textrm{vs}}$-th
entry. The element phases of $\mathbf{a}_{\textrm{\textrm{df}v}}(\theta_{k})\in\mathbb{C}^{N_{\textrm{vdc}}\times1}$
and $\mathbf{a}_{\textrm{\textrm{df}h}}(\theta_{k},\phi_{k})\in\mathbb{C}^{N_{\textrm{hdc}}\times1}$
are given by
\begin{equation}
a_{\textrm{\textrm{df}v},n_{\textrm{vdc}}}(\theta_{k})=\frac{1}{M_{\textrm{v}}}\exp\left(-j\frac{2\pi}{\lambda}hm_{\textrm{v\_dc}}\cos(\theta_{k})\right),
\end{equation}
\begin{equation}
a_{\textrm{\textrm{df}h},n_{\textrm{hdc}}}(\theta_{k},\phi_{k})=\xi_{m_{\textrm{h\_dc}}}\left(\theta_{k}\right)\exp\left(-jm_{\textrm{h\_dc}}\phi_{k}\right),
\end{equation}
where $\xi_{m_{\textrm{h\_dc}}}\left(\theta_{k}\right)=M_{\textrm{h}}j^{m_{\textrm{h\_dc}}}J_{m_{\textrm{h\_ds},1}}\left(\gamma(\theta_{k})\right)J_{m_{\textrm{h\_ds},2}}\left(\gamma(\theta_{k})\right),$
and $m_{\textrm{h\_dc}}=m_{\textrm{h\_ds},1}-m_{\textrm{h\_ds},2}$
($m_{\textrm{h\_ds},1},m_{\textrm{h\_ds},2}\in\mathcal{M}_{\textrm{h\_ds}})$.
$\mathcal{M}_{\textrm{h\_ds}}=\left\{ m_{\textrm{h\_de}},m_{\textrm{h\_sp}}\right\} $
collects the horizontal locations of the elements in the dense and
sparse arrays. \eqref{eq:dfmodel} can be viewed as the signal $\mathbf{d}_{m_{\textrm{t}}}$
received at an array with steering matrix $\mathbf{A}_{\textrm{df}}$.

Now we formulate the received samples in the tensor form. We first
decompose $\mathbf{\mathbf{y}_{\textrm{df\textrm{,}\ensuremath{m_{\textrm{t}}}}}}$
into the vertical and horizontal domains (corresponding to the first
and second modes of the tensor model), as given by $\mathbf{\mathbf{Y}_{\textrm{df\textrm{,}\ensuremath{m_{\textrm{t}}}}}}=\textrm{invec}(\mathbf{\mathbf{y}_{\textrm{df\textrm{,}\ensuremath{m_{\textrm{t}}}}}})\in\mathbb{C}^{(2N_{\textrm{vd}}N_{\textrm{vs}}-1)\times M_{\textrm{hr}}}$.
Then, we collect $\mathbf{Y}_{\textrm{df\textrm{,}\ensuremath{m_{\textrm{t}}}}}$
at all time frames, and store them in the time domain (corresponding
to the third mode of the tensor model). Thus, the
received samples can be expressed as
\begin{align}
\mathcal{Y}_{\textrm{df}} & =[\mathbf{Y}_{\textrm{df\textrm{,}\ensuremath{1}}}\sqcup_{3}\mathbf{Y}_{\textrm{df\textrm{,}\ensuremath{2}}}\sqcup_{3}\ldots\sqcup_{3}\mathbf{Y}_{\textrm{df\textrm{,}}M_{\textrm{t}}}]\nonumber \\
 & =\mathcal{A}_{\textrm{df}}\times_{3}\mathbf{D}+\mathcal{N}_{\textrm{df}}\in\mathbb{C}^{N_{\textrm{vdc}}\times N_{\textrm{hdc}}\times M_{\textrm{t}}},\label{eq:modeltensorfinal}
\end{align}
where $\mathbf{D}=[\mathbf{d}_{1},\mathbf{d}_{2},\ldots,\mathbf{d}_{M_{\textrm{t}}}]^{T}\in\mathbb{C}^{M_{\textrm{t}}\times K}$,
$\mathcal{A}_{\textrm{df}}\in\mathbb{C}^{N_{\textrm{vdc}}\times N_{\textrm{hdc}}\times K}$
is known as the space-time response tensor \cite{SCIVanderveen},
and $\mathcal{N}_{\textrm{df}}$ is the noise tensor model. Due to
the above-mentioned process \eqref{eq:dfmodel}-\eqref{eq:modeltensorfinal},
the elements of $\mathcal{N}_{\textrm{df}}$ are all zeros except
$\mathbf{\sigma}_{\textrm{n}}^{2}$ at $(0,0,m_{\textrm{t}}),$ $m_{\textrm{t}}=1,2,\ldots,M_{\textrm{t}}$.
In \eqref{eq:modeltensorfinal}, $\mathcal{A}_{\textrm{df}}$ is obtained
as
\begin{align}
 & \mathcal{A}_{\textrm{df}}=[\mathbf{a}_{\textrm{\textrm{df}v}}(\theta_{1})\circ\mathbf{a}_{\textrm{\textrm{df}h}}(\theta_{1},\phi_{1})\sqcup_{3}\mathbf{a}_{\textrm{\textrm{df}v}}(\theta_{2})\circ\mathbf{a}_{\textrm{\textrm{df}h}}(\theta_{2},\phi_{2})\nonumber \\
 & \quad\sqcup_{3}\ldots\sqcup_{3}\mathbf{a}_{\textrm{\textrm{df}v}}(\theta_{K})\circ\mathbf{a}_{\textrm{\textrm{df}h}}(\theta_{K},\phi_{K})].\label{eq:STresponsetensor}
\end{align}

By substituting \eqref{eq:STresponsetensor} into \eqref{eq:modeltensorfinal},
we obtain
\begin{align}
\mathcal{Y}_{\textrm{df}} & =\sum_{^{k=1}}^{K}\mathbf{a}_{\textrm{\textrm{df}v}}(\theta_{k})\circ\mathbf{a}_{\textrm{\textrm{df}h}}(\theta_{k},\phi_{k})\circ\left[\mathbf{D}\right]_{:,k}+\mathcal{N}_{\textrm{df}}\nonumber \\
 & =\left\llbracket \mathcal{Z}_{\textrm{df}};\mathbf{A}_{\textrm{\textrm{df}v}},\mathbf{A}_{\textrm{\textrm{df}h}},\mathbf{D}\right\rrbracket +\mathcal{N}_{\textrm{df}},\label{eq:tensorDDD}
\end{align}
where $\left[\mathbf{A}_{\textrm{\textrm{df}v}}\right]_{:,k}=\mathbf{a}_{\textrm{\textrm{df}v}}(\theta_{k}),$
$\left[\mathbf{A}_{\textrm{\textrm{df}h}}\right]_{:,k}=\mathbf{a}_{\textrm{\textrm{df}h}}(\theta_{k},\phi_{k})$,
and $\mathcal{Z}_{\textrm{df}}\in\mathbb{C}^{K\times K\times K}$
is an order-3 identity superdiagonal tensor\footnote{A tensor $\mathcal{A}\in\mathbb{C}^{I_{1}\times I_{2}\times\cdots\times I_{N}}$
is diagonal if $a_{i_{1}i_{2}\cdots i_{N}}\neq0$ only if $i_{1}=i_{2}=\cdots=i_{N}$.
When $I_{1}=I_{2}=\cdots=I_{N}$, $\mathcal{A}$ is called as superdiagonal.}.

Eq. \eqref{eq:modeltensorfinal} shows that the elements of the equivalent
signal matrix $\mathbf{D}\in\mathbb{C}^{M_{\textrm{t}}\times K}$
are actually the received signal powers due to the autocorrelation
calculation \eqref{eq:Calculatingautocorrelation}. To build a full-rank
matrix $\mathbf{D}$ for DoA estimation, one would need to assume
that the received signal powers change over time, and the power of
every signal is different from each other, as assumed in \cite{NestedNoSmoothing}.
However, such assumption is unrealistic in practice. It
is possible that the rank of the equivalent device signal matrix $\mathbf{D}$
is smaller than the number of devices $K$, i.e., $\textrm{Rank}(\mathbf{D})<K$,
which behaves as if some of the received signals are coherent, leading
to incorrect channel estimation. Prevent possible coherent signals,
we propose a novel approach in the next section to construct a signal
tensor model with suitable $n$-ranks in all modes. This allows us
to estimate the 2-D DoAs of $K$ devices.

\section{Spatial smoothing-based tensor $n$-rank enhancement}

In this section, we analyze the relationship between the rank of $\mathbf{D}$
and the $n$-rank of $\mathcal{Y}_{\textrm{df}}$. We propose a spatial
smoothing-based method to enhance the $n$-rank of $\mathcal{Y}_{\textrm{df}}$.
By using the proposed method, we verify that one can build a signal
tensor model that provides a large enough rank in each mode to perform
the DoA estimation of $K$ devices, even when the received signal
powers of all the devices are equal. These powers are steady temporally
across all time frames.

As discussed in Section III-C, the rank of $\mathbf{D}$
in \eqref{eq:tensorDDD} is typically smaller than the number of devices,
$K$, in practice. Based on the uniqueness condition of tensor CP
decomposition \cite{TensorDecompositions2}, we first provide the
following theorem to evaluate the impact of $\textrm{Rank}(\mathbf{D})$
on the $n$-ranks of the tensor $\mathcal{Y}_{\textrm{df}}$.

\noindent \begin{theorem} For $\mathcal{Y}_{\textrm{df}}=\left\llbracket \mathcal{Z}_{\textrm{df}};\mathbf{A}_{\textrm{\textrm{df}v}},\mathbf{A}_{\textrm{\textrm{df}h}},\mathbf{D}\right\rrbracket +\mathcal{N}_{\textrm{df}},$
if $\textrm{Rank}(\mathbf{D})<K$, the ranks of the signal spaces
of $\mathcal{Y}_{\textrm{df}}$ in all modes are smaller than the
number of devices $K$, i.e., $\textrm{Rank}(\mathbf{U}_{\textrm{v,}n})<K,$
$n=1,2,3$, where $\mathbf{U}_{\textrm{v,}n}$ is the mode-$n$ signal
subspace of $\mathcal{Y}_{\textrm{df}}$ with $\mathbf{U}_{\textrm{v,}1}\in\mathbb{C}^{N_{\textrm{vdc}}\times K}$,
$\mathbf{U}_{\textrm{v,}2}\in\mathbb{C}^{N_{\textrm{hdc}}\times K}$,
and $\mathbf{U}_{\textrm{v,}3}\in\mathbb{C}^{M_{\textrm{t}}\times K}$.

\noindent \end{theorem}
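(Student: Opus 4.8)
The plan is to unfold the noiseless part of $\mathcal{Y}_{\textrm{df}}$ in each mode, write every unfolding as one factor matrix times the transpose of a Khatri--Rao product of the others, and bound its rank by that of the outer factor. Write $\mathcal{S}_{\textrm{df}}:=\left\llbracket \mathcal{Z}_{\textrm{df}};\mathbf{A}_{\textrm{dfv}},\mathbf{A}_{\textrm{dfh}},\mathbf{D}\right\rrbracket$, so $\mathcal{Y}_{\textrm{df}}=\mathcal{S}_{\textrm{df}}+\mathcal{N}_{\textrm{df}}$ and, by definition of the mode-$n$ signal subspace, $\textrm{Rank}(\mathbf{U}_{\textrm{v},n})$ equals the $n$-rank of $\mathcal{S}_{\textrm{df}}$. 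Since $\mathcal{Z}_{\textrm{df}}$ is the order-$3$ superdiagonal identity, \eqref{eq:propertymultiproduct-1} collapses the mode-$n$ unfoldings to $\mathbf{S}_{\textrm{df},(1)}=\mathbf{A}_{\textrm{dfv}}(\mathbf{D}\diamond\mathbf{A}_{\textrm{dfh}})^{T}$, $\mathbf{S}_{\textrm{df},(2)}=\mathbf{A}_{\textrm{dfh}}(\mathbf{D}\diamond\mathbf{A}_{\textrm{dfv}})^{T}$, and $\mathbf{S}_{\textrm{df},(3)}=\mathbf{D}(\mathbf{A}_{\textrm{dfh}}\diamond\mathbf{A}_{\textrm{dfv}})^{T}$. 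First I would dispose of $\mathcal{N}_{\textrm{df}}$: by construction it is supported on a single mode-$1$ and a single mode-$2$ index (the zero-phase bin of the coarray), so it can lift $\textrm{Rank}(\mathbf{U}_{\textrm{v},1})$ and $\textrm{Rank}(\mathbf{U}_{\textrm{v},2})$ by at most one and is anyway split off from the signal part in the HOSVD step; hence it suffices to bound the $n$-ranks of $\mathcal{S}_{\textrm{df}}$.

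Mode $3$ is immediate: $\textrm{Rank}(\mathbf{U}_{\textrm{v},3})=\textrm{Rank}\!\left(\mathbf{D}(\mathbf{A}_{\textrm{dfh}}\diamond\mathbf{A}_{\textrm{dfv}})^{T}\right)\le\textrm{Rank}(\mathbf{D})<K$ by the hypothesis.

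For modes $1$ and $2$ I would use $\textrm{Rank}(\mathbf{U}_{\textrm{v},1})\le\textrm{Rank}(\mathbf{A}_{\textrm{dfv}})\le N_{\textrm{vdc}}$ and $\textrm{Rank}(\mathbf{U}_{\textrm{v},2})\le\textrm{Rank}(\mathbf{A}_{\textrm{dfh}})\le N_{\textrm{hdc}}$, and then invoke the premise under which the coarray is built in the first place: one wants to resolve up to $K=(N_{\textrm{vdc}}-1)(N_{\textrm{hdc}}-1)$ devices, so with $N_{\textrm{hdc}}=M_{\textrm{hr}}=2P+1\ge3$ and $N_{\textrm{vdc}}=2N_{\textrm{vd}}N_{\textrm{vs}}-1\ge3$ one has $K>N_{\textrm{vdc}}$ and $K>N_{\textrm{hdc}}$ whenever $K$ is at (or near) its design value. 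Combining the three bounds gives $\textrm{Rank}(\mathbf{U}_{\textrm{v},n})<K$ for $n=1,2,3$. As a sanity check for the most relevant case, when the powers are equal and temporally steady $\mathbf{D}$ has rank $1$, $\mathcal{S}_{\textrm{df}}$ reduces to a single matrix outer product, and $\textrm{Rank}(\mathbf{U}_{\textrm{v},1})=\textrm{Rank}(\mathbf{U}_{\textrm{v},2})=\textrm{Rank}(\mathbf{A}_{\textrm{dfv}}\mathbf{A}_{\textrm{dfh}}^{T})=\min(N_{\textrm{vdc}},N_{\textrm{hdc}},K)<K$ directly; this is also the route the cue about CP uniqueness points to, since a rank-$r$ $\mathbf{D}$ rewrites $\mathcal{S}_{\textrm{df}}$ as a sum of $r$ matrix--vector outer products whose total tensor rank, hence every $n$-rank, stays below $K$ when $r$ is small.

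The hard part is the modes-$1$/$2$ step, and in particular being honest about where the deficiency comes from: a Khatri--Rao factor such as $\mathbf{D}\diamond\mathbf{A}_{\textrm{dfh}}$ may have full column rank $K$ even when $\mathbf{D}$ does not (e.g. $\mathbf{D}$ with repeated columns but $\mathbf{A}_{\textrm{dfh}}$ with distinct ones), so the rank drop in those modes cannot be blamed on $\mathbf{D}$ but on the finite per-dimension apertures $N_{\textrm{vdc}}$, $N_{\textrm{hdc}}$ --- which is exactly the observation that motivates applying the spatial-smoothing $n$-rank enhancement of the next section in \emph{all three} modes rather than only in the time mode. A secondary technical point is to verify that the lone nonzero fiber of $\mathcal{N}_{\textrm{df}}$ never pushes any $\textrm{Rank}(\mathbf{U}_{\textrm{v},n})$ back up to $K$, which follows from its rank-one structure together with the signal/noise eigen-separation.
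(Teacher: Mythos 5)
Your mode-3 argument matches what the paper needs: with the superdiagonal core, $\mathbf{S}_{\textrm{df},(3)}=\mathbf{D}(\mathbf{A}_{\textrm{dfh}}\diamond\mathbf{A}_{\textrm{dfv}})^{T}$ gives $\textrm{Rank}(\mathbf{U}_{\textrm{v},3})\le\textrm{Rank}(\mathbf{D})<K$ at once. The gap is in modes $1$ and $2$. You close them by assuming $K>N_{\textrm{vdc}}$ and $K>N_{\textrm{hdc}}$ (``$K$ at its design value $(N_{\textrm{vdc}}-1)(N_{\textrm{hdc}}-1)$''), but that is the opposite of the regime in which the theorem lives. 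The paper's own proof explicitly invokes $N_{\textrm{vdc}}\geq K$ and $N_{\textrm{hdc}}\geq K$ together with the Vandermonde structure of $\mathbf{A}_{\textrm{dfv}}$ and $\mathbf{A}_{\textrm{dfh}}$; the statement itself posits signal subspaces $\mathbf{U}_{\textrm{v},1}\in\mathbb{C}^{N_{\textrm{vdc}}\times K}$ and $\mathbf{U}_{\textrm{v},2}\in\mathbb{C}^{N_{\textrm{hdc}}\times K}$, which only make sense when $N_{\textrm{vdc}},N_{\textrm{hdc}}\geq K$; and Section V later extracts a nonempty mode-2 noise subspace $\mathbf{U}_{\textrm{dfh,n}}\in\mathbb{C}^{N_{\textrm{hdc}}\times(N_{\textrm{hdc}}-K)}$. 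So the bound $\textrm{Rank}(\mathbf{U}_{\textrm{v},1})\le N_{\textrm{vdc}}<K$ is not available here, and your proof of the $n=1,2$ cases collapses.

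What you did get right --- and it is the crux --- is that the rank deficiency of $\mathbf{D}$ does not propagate to modes $1$ and $2$ through the Khatri--Rao factors: when $\mathbf{A}_{\textrm{dfh}}$ has full Kruskal rank and $\mathbf{D}$ has no zero column, $\mathbf{A}_{\textrm{dfh}}\diamond\mathbf{D}$ has full column rank $K$, so $\mathbf{S}_{\textrm{df},(1)}=\mathbf{A}_{\textrm{dfv}}\left(\mathbf{A}_{\textrm{dfh}}\diamond\mathbf{D}\right)^{T}$ generically keeps rank $K$; indeed your own sanity check with $\mathbf{D}=\sigma_{\textrm{s}}^{2}\mathbf{1}$ yields $\textrm{Rank}(\mathbf{A}_{\textrm{dfv}}\mathbf{A}_{\textrm{dfh}}^{T})=K$, not $<K$, once $N_{\textrm{vdc}},N_{\textrm{hdc}}\geq K$. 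The paper's proof takes a route you did not reproduce: it asserts, from the uniqueness condition of the CP decomposition applied to the Vandermonde factor matrices with $N_{\textrm{vdc}},N_{\textrm{hdc}}\geq K$, that the $n$-ranks of the noise-free tensor $\mathcal{Y}_{\textrm{dfs}}$ are governed by $\textrm{Rank}(\mathbf{D})$, and then identifies $\textrm{Rank}(\mathbf{U}_{\textrm{v},n})$ with $\textrm{Rank}(\mathbf{Y}_{\textrm{dfs}(n)})$ via the SVD of each unfolding. To repair your attempt you would have to supply that CP-uniqueness step for modes $1$ and $2$, or otherwise confront the Khatri--Rao obstruction you yourself identified; replacing the hypothesis by $K>N_{\textrm{vdc}},N_{\textrm{hdc}}$ is not a repair but a change of regime that is incompatible with the rest of the paper.
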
 \begin{proof} See Appendix
II.\end{proof}

According to Theorem 2, when the rank of $\mathbf{D}$
in \eqref{eq:tensorDDD} is smaller than the number of devices $K$,
we cannot decompose the tensor model \eqref{eq:modeltensorfinal}
into the signal and noise spaces in all modes. As a result, the subspace-based
algorithms cannot be used to estimate the angles of the devices. To
enhance the $n$-rank of the signal tensor model, we apply spatial
smoothing techniques \cite{ICCMearly} to build
up a sample tensor model whose signal subspace is full rank in each
mode.
\begin{figure*}
\begin{centering}
\includegraphics[width=16cm]{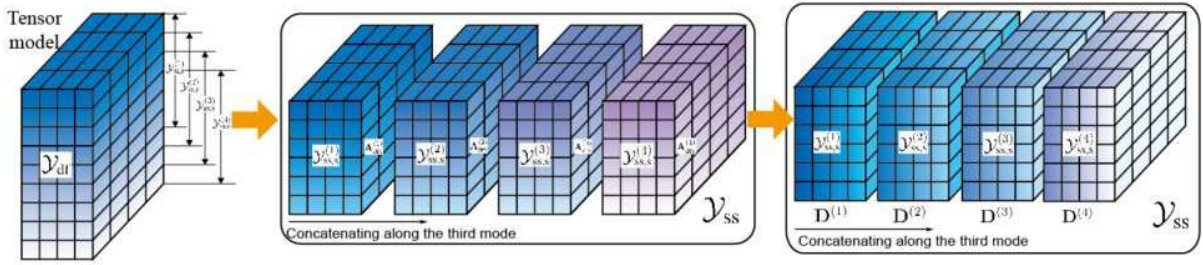}
\par\end{centering}
\centering{}\caption{An illustration of the proposed tensor $n$-rank enhancement method.
\label{fig:SpatialSmoothingTensorRank}}
\end{figure*}

We divide $\mathcal{Y}_{\textrm{df}}$ in \eqref{eq:modeltensorfinal}
into $N_{\textrm{is}}$ identical subtensors in its first mode, as
shown in the left-hand side of Fig. \ref{fig:SpatialSmoothingTensorRank}.
The $n_{\textrm{is}}$-th subtensor $(n_{\textrm{is}}=1,2,\ldots,N_{\textrm{is}})$
can be constructed as
\begin{equation}
\mathcal{Y}_{\textrm{ss}}^{(n_{\textrm{is}})}=\mathcal{Y}_{\textrm{df}}\times_{1}\mathbf{J}_{\textrm{ss,}n_{\textrm{is}}}\in\mathbb{C}^{N_{\textrm{ss}}\times N_{\textrm{hdc}}\times M_{\textrm{t}}},\label{eq:subtensors}
\end{equation}
 where $\mathbf{J}_{\textrm{ss,}n_{\textrm{is}}}=[\mathbf{0}_{N_{\textrm{ss}}\times(n_{\textrm{is}}-1)},\mathbf{I}_{N_{\textrm{ss}}},\mathbf{0}_{N_{\textrm{ss}}\times(N_{\textrm{is}}-n_{\textrm{is}})}]$
and $N_{\textrm{ss}}=2N_{\textrm{vd}}N_{\textrm{vs}}-N_{\textrm{is}}$.

We can see that
\begin{equation}
\mathcal{Y}_{\textrm{ss}}^{(n_{\textrm{is}})}=\left\llbracket \mathcal{Z}_{\textrm{df}};\mathbf{A}_{\textrm{\textrm{df}v}}^{(n_{\textrm{is}})},\mathbf{A}_{\textrm{\textrm{df}h}},\mathbf{D}\right\rrbracket +\mathcal{N}^{(n_{\textrm{is}})},
\end{equation}
where $\mathbf{A}_{\textrm{\textrm{df}v}}^{(n_{\textrm{is}})}=\mathbf{J}_{\textrm{ss,}n_{\textrm{is}}}\mathbf{A}_{\textrm{\textrm{df}v}}=\mathbf{A}_{\textrm{\textrm{df}v}}^{(1)}\mathbf{Q}_{\textrm{ss}}^{n_{\textrm{is}}-1}\in\mathbb{C}^{N_{\textrm{ss}}\times K}$,
$\mathbf{Q}_{\textrm{ss}}=\textrm{diag}\left(q_{\textrm{ss},1},q_{\textrm{ss},2},\ldots,q_{\textrm{ss},K}\right)\in\mathbb{C}^{K\times K},$
$q_{\textrm{ss},k}=e^{j\frac{2\pi}{\lambda}h\cos(\theta_{k})}$, and
$\mathcal{N}^{(n_{\textrm{is}})}=\mathcal{N}_{\textrm{df}}\times_{1}\mathbf{J}_{\textrm{ss,}n_{\textrm{is}}}\in\mathbb{C}^{N_{\textrm{ss}}\times N_{\textrm{hdc}}\times M_{\textrm{t}}}$
is the selected subtensor of the noise model. We can verify that
only when $N_{\textrm{vd}}N_{\textrm{vs}}-N_{\textrm{ss}}+1\leq n_{\textrm{is}}\leq N_{\textrm{vd}}N_{\textrm{vs}}$,
$\mathcal{N}^{(n_{\textrm{is}})}$ has $\mathbf{\sigma}_{\textrm{n}}^{2}$
at the $n_{\textrm{ss}}=(N_{\textrm{vd}}N_{\textrm{vs}}-n_{\textrm{is}}+1)$-th
entry of the first mode, while $n_{\textrm{hdc}}=P+1$ and $m_{\textrm{t}}=1,2,\ldots,M_{\textrm{t}}$.
In all other cases, $\mathcal{N}^{(n_{\textrm{is}})}=0$.

By concatenating the $N_{\textrm{is}}$ identical subtensors $\mathcal{Y}_{\textrm{ss}}^{(n_{\textrm{is}})}$,
$n_{\textrm{is}}=1,\ldots,N_{\textrm{is}}$, as shown in the middle
block of Fig. \ref{fig:SpatialSmoothingTensorRank}, the spatially
smoothed signal tensor model can be constructed as
\begin{align}
\mathcal{Y}_{\textrm{ss}} & =\left[\underset{{\scriptstyle n_{\textrm{is}}=1,\ldots,N_{\textrm{is}}}}{\sqcup_{3}}\mathcal{Y}_{\textrm{ss}}^{(n_{\textrm{is}})}\right]\in\mathbb{C}^{N_{\textrm{ss}}\times N_{\textrm{hdc}}\times\left(M_{\textrm{t}}N_{\textrm{is}}\right)},\label{eq:smoothingModel}
\end{align}
which has a rank large enough in each mode to perform DoA estimation
of the $K$ devices.

Now, we proceed to verify the $n$-ranks of $\mathcal{Y}_{\textrm{ss}}$.
Define $\mathcal{Y}_{\textrm{ss,s}}^{(n_{\textrm{is}})}=\left\llbracket \mathcal{Z}_{\textrm{df}};\mathbf{A}_{\textrm{\textrm{df}v}}^{(n_{\textrm{is}})},\mathbf{A}_{\textrm{\textrm{df}h}},\mathbf{D}\right\rrbracket .$
We have
\begin{align}
 & \left[\mathcal{Y}_{\textrm{ss,s}}^{(n_{\textrm{is}})}\right]_{n_{\textrm{ss}},n_{\textrm{hdc}},m_{\textrm{t}}}=\sum_{^{k=1}}^{K}\left[\mathbf{A}_{\textrm{\textrm{df}v}}^{(n_{\textrm{is}})}\right]_{n_{\textrm{ss}},k}\left[\mathbf{A}_{\textrm{\textrm{df}h}}\right]_{n_{\textrm{hdc}},k}\left[\mathbf{D}\right]_{m_{\textrm{t}},k}\nonumber \\
 & =\sum_{^{k=1}}^{K}\left(\left[\mathbf{A}_{\textrm{\textrm{df}v}}^{(1)}\right]_{n_{\textrm{ss}},k}q_{\textrm{ss},k}^{n_{\textrm{is}}-1}\right)\left[\mathbf{A}_{\textrm{\textrm{df}h}}\right]_{n_{\textrm{hdc}},k}\left[\mathbf{D}\right]_{m_{\textrm{t}},k}\nonumber \\
 & =\sum_{^{k=1}}^{K}\left[\mathbf{A}_{\textrm{\textrm{df}v}}^{(1)}\right]_{n_{\textrm{ss}},k}\left[\mathbf{A}_{\textrm{\textrm{df}h}}\right]_{n_{\textrm{hdc}},k}\left(\left[\mathbf{D}\right]_{m_{\textrm{t}},k}q_{\textrm{ss},k}^{n_{\textrm{is}}-1}\right).
\end{align}
Hence,
\begin{align}
\mathcal{Y}_{\textrm{ss,s}}^{(n_{\textrm{is}})}=\left\llbracket \mathcal{Z}_{\textrm{df}};\mathbf{A}_{\textrm{\textrm{df}v}}^{(n_{\textrm{is}})},\mathbf{A}_{\textrm{\textrm{df}h}},\mathbf{D}\right\rrbracket =\left\llbracket \mathcal{Z}_{\textrm{df}};\mathbf{A}_{\textrm{\textrm{df}v}0},\mathbf{A}_{\textrm{\textrm{df}h}},\mathbf{D}^{(n_{\textrm{is}})}\right\rrbracket ,
\end{align}
where $\mathbf{A}_{\textrm{\textrm{df}v}0}=\mathbf{A}_{\textrm{\textrm{df}v}}^{(1)}$
and $\mathbf{D}^{(n_{\textrm{is}})}=\mathbf{D}\mathbf{Q}_{\textrm{ss}}^{n_{\textrm{is}}-1}.$
Therefore, \eqref{eq:smoothingModel} can be rewritten as
\begin{align}
\mathcal{Y}_{\textrm{ss}}=\left[\underset{{\scriptstyle n_{\textrm{is}}=1,\ldots,N_{\textrm{is}}}}{\sqcup_{3}}\mathcal{Y}_{\textrm{ss}}^{(n_{\textrm{is}})}\right]=\left\llbracket \mathcal{Z}_{\textrm{df}};\mathbf{A}_{\textrm{\textrm{df}v}0},\mathbf{A}_{\textrm{\textrm{df}h}},\mathbf{D}_{\textrm{ss}}\right\rrbracket +\mathcal{N}_{\textrm{ss}},\label{eq:modelFinal}
\end{align}
where $\mathbf{D}_{\textrm{ss}}=\left[\left(\mathbf{D}^{(1)}\right)^{T},\left(\mathbf{D}^{(2)}\right)^{T},\ldots,\left(\mathbf{D}^{(N_{\textrm{is}})}\right)^{T}\right]^{T}\in\mathbb{C}^{\left(M_{\textrm{t}}N_{\textrm{is}}\right)\times K},$
and $\mathcal{N}_{\textrm{ss}}=\left[\underset{{\scriptstyle n_{\textrm{is}}=1,\ldots,N_{\textrm{is}}}}{\sqcup_{3}}\mathcal{N}^{(n_{\textrm{is}})}\right]\in\mathbb{C}^{N_{\textrm{ss}}\times N_{\textrm{hdc}}\times\left(M_{\textrm{t}}N_{\textrm{is}}\right)}$
is a tensor of all zeros except $\mathbf{\sigma}_{\textrm{n}}^{2}$
at $(n_{\textrm{ss}},0,\tilde{m}_{\textrm{t}}),$ where $n_{\textrm{ss}}=(N_{\textrm{vd}}N_{\textrm{vs}}-n_{\textrm{is}}+1)$,
$M_{\textrm{t}}(n_{\textrm{is}}-1)\leq\tilde{m}_{\textrm{t}}\leq M_{\textrm{t}}n_{\textrm{is}}$,
and $n_{\textrm{is}}=1,\ldots,N_{\textrm{is}}$.

An illustration of \eqref{eq:modelFinal} is shown at the right of
Fig. \ref{fig:SpatialSmoothingTensorRank}, where the recurrence relations
among the divided subtensors in the mode-1 is equivalent to those
in the mode-3. $\mathbf{A}_{\textrm{\textrm{df}v}}^{(n_{\textrm{is}})}$
and $\mathbf{D}^{(n_{\textrm{is}})}$ are the factor matrices \cite{TensorDecompositions1}
of mode-1 and mode-3, respectively. This property can be used to enhance
the $n$-ranks of the signal tensor model.

We consider the extreme case where the received powers of all the
devices are equal, and these powers are steady temporally across all
time frames, i.e., $\mathbf{\sigma}_{\textrm{s},m_{\textrm{t}},k}^{2}=\mathbf{\sigma}_{\textrm{s}}^{2},$
$m_{\textrm{t}}=1,2,\ldots,M_{\textrm{t}}$, and $k=1,2,\ldots,K$.
Then, $\mathbf{D}=\mathbf{\sigma}_{\textrm{s}}^{2}\mathbf{1}_{M_{\textrm{t}}\times K}.$
As a result, $\mathbf{D}_{\textrm{ss}}$ can be rewritten as $\mathbf{D}_{\textrm{ss}}=\mathbf{\sigma}_{\textrm{s}}^{2}\tilde{\mathbf{Q}}_{\textrm{ss}}\otimes\mathbf{1}_{M_{\textrm{t}}},$
where   $\tilde{\mathbf{Q}}_{\textrm{ss}}=[\mathbf{1}_{K},\mathbf{q}_{\textrm{ss},1},\mathbf{q}_{\textrm{ss},2},\ldots,\mathbf{q}_{\textrm{ss},N_{\textrm{is}}-1}]^{T}\in\mathbb{C}^{N_{\textrm{is}}\times K},$
$\mathbf{q}_{\textrm{ss},n'_{\textrm{is}}}=\left[q_{\textrm{ss},1}^{n'_{\textrm{is}}},q_{\textrm{ss},2}^{n'_{\textrm{is}}},\ldots,q_{\textrm{ss},K}^{n'_{\textrm{is}}}\right]^{T}$,
and $n'_{\textrm{is}}=1,2,\ldots,N_{\textrm{is}}-1.$

Because the paths are from different directions, $\tilde{\mathbf{Q}}_{\textrm{ss}}$
is an $N_{\textrm{is}}\times K$ Vandermonde matrix and $\textrm{Rank}(\tilde{\mathbf{Q}}_{\textrm{ss}})=\min\left(N_{\textrm{is}},K\right).$
$\textrm{Rank}(\tilde{\mathbf{Q}}_{\textrm{ss}})=K$ iff $N_{\textrm{is}}\geq K$.
According to Lemma 1, $\textrm{Rank}_{n}(\tilde{\mathcal{Y}}_{\textrm{ss}})=K$,
when $\textrm{Rank}(\mathbf{D}_{\textrm{ss}})=K$. Thus, the signal
and noise spaces of $\mathcal{Y}_{\textrm{ss}}$ in \eqref{eq:smoothingModel}
can be decomposed in each mode.

We note that the number of $\mathcal{Y}_{\textrm{ss}}^{(n_{\textrm{is}})}$
needs to be larger than the number of devices, i.e., $N_{\textrm{is}}\geq K$,
to guarantee that $\mathcal{Y}_{\textrm{ss}}$ is full rank. Also,
the system DoF available after spatial smoothing is proportional to
the size of $\mathcal{Y}_{\textrm{ss}}^{(n_{\textrm{is}})}$. Since
the total number of elements in $\mathcal{Y}_{\textrm{df}}$ is constant,
increasing the number of $\mathcal{Y}_{\textrm{ss}}^{(n_{\textrm{is}})}$
implies that the size of each $\mathcal{Y}_{\textrm{ss}}^{(n_{\textrm{is}})}$
is smaller, while a larger size of each $\mathcal{Y}_{\textrm{ss}}^{(n_{\textrm{is}})}$
means there is a smaller number of recurrence shifts available. In
this sense, the best strategy is to minimize the difference between
$N_{\textrm{is}}$ and $N_{\textrm{ss}}$. Since in our system, we
have $N_{\textrm{is}}+N_{\textrm{ss}}-1=N_{\textrm{vdc}}=2N_{\textrm{vd}}N_{\textrm{vs}}-1$,
we set $N_{\textrm{ss}}=N_{\textrm{is}}=N_{\textrm{vd}}N_{\textrm{vs}}.$

\noindent \textbf{Remark: }After spatial smoothing, the system DoF
becomes half of that in \eqref{eq:ops}, because we divide $\mathcal{Y}_{\textrm{df}}$
into multiple $\mathcal{Y}_{\textrm{ss}}^{(n_{\textrm{is}})}$. Therefore,
to prevent the system DoF from decreasing and achieve the target set
in Section III-C, we modify (C1) in the optimization problem \eqref{eq:ops}
to $N_{\textrm{vd}}N_{\textrm{vs}}\geq M_{\textrm{vr}}.$ Applying
the analytical strategy developed in Section III-C, we formulate the
modified optimization problem \eqref{eq:ops} as
\begin{align}
\min_{N_{\textrm{vd}},N_{\textrm{hd}}\in\mathbb{Z}^{+}}M_{\textrm{rf}} & =N_{\textrm{vd}}N_{\textrm{hd}}+\left\lceil \frac{M_{\textrm{vr}}}{N_{\textrm{vd}}}\right\rceil \left\lceil \frac{M_{\textrm{hr}}}{N_{\textrm{hd}}}\right\rceil \label{eq:ops-1-1}\\
\textrm{s.t. } & \textrm{(C3) and (C4)}.\nonumber
\end{align}
We see that the minimum $M_{\textrm{rf}}$ can be obtained when $N_{\textrm{vd}}N_{\textrm{hd}}$
and $\left\lceil M_{\textrm{vr}}/N_{\textrm{vd}}\right\rceil \left\lceil M_{\textrm{hr}}/N_{\textrm{hd}}\right\rceil $
are close or equal. Because $N_{\textrm{vd}},N_{\textrm{hd}},N_{\textrm{vs}},N_{\textrm{hs}}\in\mathbb{Z}^{+}$,
the optimal value of $N_{\textrm{vd}},N_{\textrm{hd}},N_{\textrm{vs}},$
and $N_{\textrm{hs}}$ can be obtained.

\section{2-D DoA estimation}

In this section, the 2-D DoAs are estimated by developing a new tensor-based
subspace estimation algorithm. By exploiting the recurrence relations
among the UCAs, the elevation DoAs are estimated first, and then the
corresponding azimuth angles are estimated by using the tensor MUSIC.
The hardware and software complexities are analyzed in the end.

\subsection{Estimation of Elevation Angle }

We first propose a tensor-based total least-squares (TLS)-ESPRIT algorithm
to estimate the elevation angle of each device. The
HOSVD of the measurement tensor $\mathcal{Y}_{\textrm{ss}}$ is given
by
\begin{align}
\mathcal{Y}_{\textrm{ss}} & =\mathcal{L}\times_{1}\mathbf{U}_{\textrm{dfv0}}\times_{2}\mathbf{U}_{\textrm{dfh}}\times_{3}\mathbf{U}_{\textrm{ss}}\nonumber \\
 & =\left\llbracket \mathcal{L};\mathbf{U}_{\textrm{dfv0}},\mathbf{U}_{\textrm{dfh}},\mathbf{U}_{\textrm{ss}}\right\rrbracket \in\mathbb{C}^{N_{\textrm{ss}}\times N_{\textrm{hdc}}\times\left(M_{\textrm{t}}N_{\textrm{is}}\right)},\label{eq:hosvd}
\end{align}
where the unitary matrices, $\mathbf{U}_{\textrm{dfv0}}\in\mathbb{C}^{N_{\textrm{ss}}\times N_{\textrm{ss}}}$,
$\mathbf{U}_{\textrm{dfh}}\in\mathbb{C}^{N_{\textrm{hdc}}\times N_{\textrm{hdc}}}$,
and $\mathbf{U}_{\textrm{ss}}\in\mathbb{C}^{\left(M_{\textrm{t}}N_{\textrm{is}}\right)\times\left(M_{\textrm{t}}N_{\textrm{is}}\right)}$,
are the left singular matrices of the mode-$n$  unfoldings of tensor
$\mathcal{\mathcal{Y}_{\textrm{ss}}},$ and the core tensor $\mathcal{L}\in\mathbb{C}^{N_{\textrm{ss}}\times N_{\textrm{hdc}}\times\left(M_{\textrm{t}}N_{\textrm{is}}\right)}$
is obtained by moving the singular matrices to the left-hand side
of \eqref{eq:hosvd}:
\begin{equation}
\mathcal{L}=\mathcal{Y}_{\textrm{ss}}\times_{1}\mathbf{U}_{\textrm{dfv0}}^{H}\times_{2}\mathbf{U}_{\textrm{dfh}}^{H}\times_{3}\mathbf{U}_{\textrm{ss}}^{H}.
\end{equation}

Define $\tilde{\mathcal{Y}}_{\textrm{ss}}=\left\llbracket \mathcal{Z}_{\textrm{df}};\mathbf{A}_{\textrm{\textrm{df}v0}},\mathbf{A}_{\textrm{\textrm{df}h}},\mathbf{D}_{\textrm{ss}}\right\rrbracket ,$
which contains the noise-free components of $\mathcal{Y}_{\textrm{ss}}$.
By removing the noise subspace component in each mode, we obtain the
HOSVD model of $\tilde{\mathcal{Y}}_{\textrm{ss}}$, as given by
\begin{equation}
\tilde{\mathcal{Y}}_{\textrm{ss}}=\mathcal{L}_{\textrm{ss}}\times_{1}\mathbf{U}_{\textrm{dfv0,s}}\times_{2}\mathbf{U}_{\textrm{dfh,s}}\times_{3}\mathbf{U}_{\textrm{ss,s}}\in\mathbb{C}^{N_{\textrm{ss}}\times N_{\textrm{hdc}}\times\left(M_{\textrm{t}}N_{\textrm{is}}\right)},\label{eq:truncatedHOSVD}
\end{equation}
where $\mathbf{U}_{\textrm{dfv0,s}}\in\mathbb{C}^{N_{\textrm{ss}}\times K}$,
$\mathbf{U}_{\textrm{dfh,s}}\in\mathbb{C}^{N_{\textrm{hdc}}\times K}$,
and $\mathbf{U}_{\textrm{ss,s}}\in\mathbb{C}^{\left(M_{\textrm{t}}N_{\textrm{is}}\right)\times K}$
are the signal subspaces in the first, second, and third modes, respectively;
and $\mathcal{L}_{\textrm{ss}}\in\mathbb{C}^{K\times K\times K}$
is obtained by discarding insignificant singular values of $\mathcal{Y}_{\textrm{ss}}$
in all the modes.

Define the signal subspace as
\begin{equation}
\mathcal{U}_{\textrm{s}}=\mathcal{L}_{\textrm{ss}}\times_{1}\mathbf{U}_{\textrm{dfv0,s}}\times_{2}\mathbf{U}_{\textrm{dfh,s}}\in\mathbb{C}^{N_{\textrm{ss}}\times N_{\textrm{hdc}}\times K}.\label{eq:Us}
\end{equation}
Because $\tilde{\mathcal{Y}}_{\textrm{ss}}$ can be rewritten as
$\tilde{\mathcal{Y}}_{\textrm{ss}}=\mathcal{A}_{\textrm{ss}}\times_{3}\mathbf{D}_{\textrm{ss}}$
with $\mathcal{A}_{\textrm{ss}}=\mathcal{Z}_{\textrm{df}}\times_{1}\mathbf{A}_{\textrm{\textrm{df}v0}}\times_{2}\mathbf{A}_{\textrm{\textrm{df}h}}$,
we obtain
\begin{equation}
\mathcal{A}_{\textrm{ss}}=\mathcal{U}_{\textrm{s}}\times_{3}\mathbf{D}_{\textrm{ss}}.\label{eq:UsA}
\end{equation}
where $\mathbf{D}_{\textrm{ss}}\in\mathbb{C}^{\left(M_{\textrm{t}}N_{\textrm{is}}\right)\times K}$
is a full column rank matrix. According to the shift-invariance relation
among the subtensors in mode-1, we have
\begin{equation}
\mathcal{A}_{\textrm{ss}}\times_{1}\mathbf{J}_{\textrm{v}2}=\mathcal{A}_{\textrm{ss}}\times_{1}\mathbf{J}_{\textrm{v}1}\times_{3}\mathbf{\mathbf{\Theta}_{\textrm{v}}},\label{eq:AJv}
\end{equation}
where $\mathbf{\mathbf{\Theta}_{\textrm{v}}}=\textrm{diag}\left(e^{-j\frac{2\pi}{\lambda}h\cos(\theta_{1})},\ldots,e^{-j\frac{2\pi}{\lambda}h\cos(\theta_{K})}\right)$,
$\mathbf{J}_{\textrm{v}1}=[\mathbf{I}_{M_{\textrm{vr}}-1},\mathbf{0}_{(M_{\textrm{vr}}-1)\times1}]$,
and $\mathbf{J}_{\textrm{v}2}=[\mathbf{0}_{(M_{\textrm{vr}}-1)\times1},\mathbf{I}_{M_{\textrm{vr}}-1}].$
Let
\begin{equation}
\mathcal{U}_{\textrm{sv}1}=\mathcal{U}_{\textrm{s}}\times_{1}\mathbf{J}_{\textrm{v}1}\textrm{ and }\mathcal{U}_{\textrm{sv}2}=\mathcal{U}_{\textrm{s}}\times_{1}\mathbf{J}_{\textrm{v}2}.\label{eq:USV1USV2}
\end{equation}
By substituting \eqref{eq:UsA} into \eqref{eq:AJv}, we have $\mathcal{U}_{\textrm{sv}2}=\mathcal{U}_{\textrm{sv}1}\times_{3}\mathbf{\Psi}_{\textrm{v}}$,
where $\mathbf{\Psi}_{\textrm{v}}\in\mathbb{C}^{K\times K}$ is a
full rank matrix.
To obtain the estimate of $\mathbf{\Psi}_{\textrm{v}}$,
we define $\mathbf{\Upsilon}_{\textrm{v}}=\left[\mathbf{\Upsilon}_{\textrm{v}1}\quad\mathbf{\Upsilon}_{\textrm{v2}}\right]\in\mathbb{C}^{K\times2K}$.
We now generalize the matrix TLS problem formulation \cite{Arraysignalprocessingconceptsandtechniques}
to the tensor setting, as follows.
\begin{align}
\hat{\mathbf{\Upsilon}}_{\textrm{v}} & =\arg\min_{\mathbf{\Upsilon}_{\textrm{v}}}\left\Vert \mathcal{U}_{\textrm{sv}1}\times_{3}\mathbf{\Upsilon}_{\textrm{v}1}+\mathcal{U}_{\textrm{sv}2}\times_{3}\mathbf{\Upsilon}_{\textrm{v2}}\right\Vert ,\label{eq:tls}\\
 & \quad\textrm{s.t.}\quad\mathbf{\Upsilon}_{\textrm{v}}\mathbf{\Upsilon}_{\textrm{v}}^{H}=\mathbf{I}_{K},\nonumber
\end{align}
which finds a unitary matrix $\mathbf{\Upsilon}_{\textrm{v}}$ with
orthogonal submatrices to $\mathcal{U}_{\textrm{sv}1}$ and $\mathcal{U}_{\textrm{sv}2}$
in mode-3.

The mode-3 unfoldings of $\mathcal{U}_{\textrm{sv}1}$
is given by
\begin{equation}
\mathbf{U}_{\textrm{sv}1}{}_{(3)}=\mathbf{U}_{\textrm{s}}{}_{(3)}\left(\mathbf{J}_{\textrm{v}1}\otimes\mathbf{I}_{N_{\textrm{hdc}}}\right)^{T},
\end{equation}
where $\mathbf{U}_{\textrm{s}}{}_{(3)}\in\mathbb{C}^{K\times M_{\textrm{vr}}M_{\textrm{hr}}M{}_{\textrm{f}}}$
is the mode-3 unfolding of $\mathcal{U}_{\textrm{s}}$. The mode-3
unfoldings of $\mathcal{U}_{\textrm{sv}2}$ can be formulated in the
same way. Since $\left\Vert \mathcal{A}\right\Vert =\left\Vert \mathbf{A}_{(n)}\right\Vert _{\textrm{F}}$
$(n=1,2,\ldots,N)$ \cite{TensorDecompositions1}, we rewrite the
tensor TLS problem \eqref{eq:tls} as
\begin{align}
\hat{\mathbf{\Upsilon}}_{\textrm{v}} & =\arg\min_{\mathbf{\Upsilon}_{\textrm{v}}}\left\Vert \mathbf{\Upsilon}_{\textrm{v}1}\mathbf{U}_{\textrm{s}}{}_{(3)}\left(\mathbf{J}_{\textrm{v}1}\otimes\mathbf{I}_{N_{\textrm{hdc}}}\right)^{T}\right.\nonumber \\
 & \qquad\left.+\mathbf{\Upsilon}_{\textrm{v2}}\mathbf{U}_{\textrm{s}}{}_{(3)}\left(\mathbf{J}_{\textrm{v}2}\otimes\mathbf{I}_{N_{\textrm{hdc}}}\right)^{T}\right\Vert _{\textrm{F}}\nonumber \\
 & =\arg\min_{\mathbf{\Upsilon}_{\textrm{v}}}\left\Vert \mathbf{W}_{\textrm{v}}\mathbf{\Upsilon}_{\textrm{v}}^{T}\right\Vert _{\textrm{F}},
\end{align}
where
\begin{align}
\mathbf{W}_{\textrm{v}} & =\left[\left(\mathbf{J}_{\textrm{v}1}\otimes\mathbf{I}_{N_{\textrm{hdc}}}\right)\mathbf{U}_{\textrm{s}}{}_{(3)}^{T}\quad\left(\mathbf{J}_{\textrm{v}2}\otimes\mathbf{I}_{N_{\textrm{hdc}}}\right)\mathbf{U}_{\textrm{s}}{}_{(3)}^{T}\right]\nonumber \\
 & \in\mathbb{C}^{(N_{\textrm{ss}}-1)N_{\textrm{hdc}}\times2K}.
\end{align}
The SVD of $\mathbf{W}_{\textrm{v}}^{H}\mathbf{W}_{\textrm{v}}$ is
written as $\mathbf{W}_{\textrm{v}}^{H}\mathbf{W}_{\textrm{v}}=\mathbf{\dot{U}}_{\textrm{v}}\mathbf{\dot{\Lambda}}_{\textrm{v}}\mathbf{\dot{V}}_{\textrm{v}},$
where $\mathbf{\dot{U}}_{\textrm{v}}\in\mathbb{C}^{2K\times2K}$ and
$\mathbf{\dot{V}}_{\textrm{v}}\in\mathbb{C}^{2K\times2K}$ are the
left and right singular matrices, respectively; and $\mathbf{\dot{\Lambda}}_{\textrm{v}}\in\mathbb{C}^{2K\times2K}$
contains the singular values. We partition $\mathbf{\dot{U}}_{\textrm{v}}$
into four blocks:
\begin{equation}
\mathbf{\dot{U}}_{\textrm{v}}=\left[\begin{array}{cc}
\mathbf{\dot{U}}_{\textrm{v11}} & \mathbf{\dot{U}}_{\textrm{v12}}\\
\mathbf{\dot{U}}_{\textrm{v21}} & \mathbf{\dot{U}}_{\textrm{v22}}
\end{array}\right]\in\mathbb{C}^{2K\times2K}.\label{eq:TLSfinal}
\end{equation}
Let $\hat{\mathbf{\Upsilon}}_{\textrm{v}1}=\mathbf{\dot{U}}_{\textrm{v12}}^{T}\in\mathbb{C}^{K\times K}$
and $\mathbf{\hat{\Upsilon}}_{\textrm{v}2}=\mathbf{\dot{U}}_{\textrm{v22}}^{T}\in\mathbb{C}^{K\times K}$.
According to the standard TLS \cite{Arraysignalprocessingconceptsandtechniques},
the estimate of $\mathbf{\Psi}_{\textrm{v}}$ is given by $\hat{\mathbf{\Psi}}_{\textrm{v}}=-\hat{\mathbf{\Upsilon}}_{\textrm{v}1}\hat{\mathbf{\Upsilon}}_{\textrm{v}2}^{-1},$
where the $K$ eigenvalues of $\hat{\mathbf{\Psi}}_{\textrm{v}}$,
i.e., $\psi_{\textrm{v,}k}$, $k=1,2,\ldots,K$, are sorted in descending
order. According to the array steering expression \eqref{eq:verticalarray},
the elevation angle of the $k$-th device can be estimated as
\begin{equation}
\hat{\theta}_{k}=\arccos\left(j\lambda\ln(\psi_{\textrm{v,}k})/(2\pi h)\right).\label{theta}
\end{equation}

\subsection{Estimation of Azimuth Angle }

We use the tensor-MUSIC algorithm \cite{tensormmWave} to estimate
the azimuth angle of each device. According to \eqref{eq:truncatedHOSVD},
we can discard the largest $K$ singular values of the mode-$n$ unfoldings
of $\mathcal{Y}_{\textrm{ss}}$ and obtain the noise
subspace in mode-2, $\mathbf{U}_{\textrm{dfh,n}}\in\mathbb{C}^{N_{\textrm{hdc}}\times(N_{\textrm{hdc}}-K)}$.
Then, we generalize the matrix-based MUSIC to the tensor, and the
tensor MUSIC spectrum of the azimuth angle can be defined as
\begin{equation}
\textrm{SP}_{\textrm{MUSIC}}(\Phi)=\left\Vert \mathcal{A}_{\textrm{ss}}\times_{2}\mathbf{U}_{\textrm{dfh,n}}^{H}\right\Vert ^{-2},\label{eq:music}
\end{equation}
where $\Phi=\left[\phi_{1},\phi_{2},\ldots,\phi_{K}\right]$. The
mode-2 unfolding of $\mathcal{A}_{\textrm{ss}}$ can be expressed
as
\begin{equation}
\mathbf{A}_{\textrm{ss}}{}_{\textrm{(2)}}=\mathbf{A}_{\textrm{\textrm{df}h}}\mathbf{Z}_{\textrm{\textrm{df}}(2)}\left(\mathbf{I}_{M_{\textrm{t}}N_{\textrm{is}}}\otimes\mathbf{A}_{\textrm{\textrm{df}v0}}\right)^{T},\label{eq:A2}
\end{equation}
where $\mathbf{Z}_{\textrm{\textrm{df}}(2)}$ is the mode-2 unfolding of $\mathcal{Z}_{\textrm{df}}$.
According to a property of tensor multiplication and unfolding:
$\left\Vert \mathcal{A}\right\Vert =\left\Vert \mathbf{A}_{(n)}\right\Vert _{\textrm{F}},\:n=1,2,\ldots,N$,
we can rewrite the tensor MUSIC spectrum \eqref{eq:music} as
\begin{equation}
\textrm{SP}_{\textrm{MUSIC}}(\Phi)=\left\Vert \mathbf{U}_{\textrm{dfh,n}}^{H}\mathbf{A}_{\textrm{\textrm{df}h}}\mathbf{Z}_{\textrm{\textrm{df}}(2)}\left(\mathbf{I}_{M_{\textrm{t}}N_{\textrm{is}}}\otimes\mathbf{A}_{\textrm{\textrm{df}v0}}\right)^{T}\right\Vert _{\textrm{F}}^{-2}.\label{eq:MUSIC2}
\end{equation}
By substituting the estimated elevation angle of each device \eqref{theta}
into \eqref{eq:MUSIC2}, the corresponding azimuth angle $\phi_{k}$
can be estimated by searching the prominent peaks of the tensor MUSIC
spectrum \eqref{eq:MUSIC2}. Algorithm \ref{alg:algorithmESPRIT}
summarizes the procedure of the proposed tensor-based subspace estimation
algorithm.
\begin{algorithm}
\protect\caption{Tensor-based subspace estimation algorithm\label{alg:algorithmESPRIT}}

\begin{itemize}
\item \textbf{Input}: The processed signal, $\mathcal{Y}_{\textrm{ss}}$,
and the number of devices, $K$.
\item \textbf{Output}: The estimated elevation and azimuth angles, $\hat{\theta}_{k}$
and $\hat{\phi}_{k}$, $k=1,2,\ldots,K$.
\item Take the HOSVD of $\mathcal{Y}_{\textrm{ss}}$ and
obtain $\mathcal{U}_{\textrm{s}}$, $\mathcal{U}_{\textrm{sv}1}$,
and $\mathcal{U}_{\textrm{sv}2}$ according to \eqref{eq:Us} and
\eqref{eq:USV1USV2}.
\item Estimate $\hat{\mathbf{\Psi}}_{\textrm{v}}$ by solving the tenosr
TLS problem \eqref{eq:tls}.
\item Calculate the eigenvalues of $\hat{\mathbf{\Psi}}_{\textrm{v}}$,
i.e., $\psi_{\textrm{v,}k}$, $k=1,2,\ldots,K$, and estimate $\hat{\theta}_{k}$
by using \eqref{theta}.
\item Calculate $\mathbf{U}_{\textrm{dfh,n}}$ and estimate $\hat{\phi}_{k}$
by searching the prominent peaks of \eqref{eq:MUSIC2}.
\end{itemize}
\end{algorithm}

\subsection{Complexity Analysis}

We analyze the hardware and software complexity of the proposed tensor-based
parameter estimation algorithm.
\begin{figure}
\begin{centering}
\includegraphics[width=6cm]{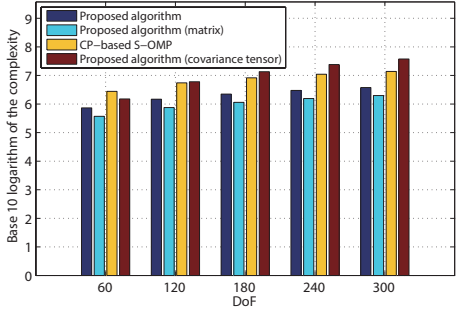}
\par\end{centering}
\centering{}\caption{Variation of the software complexity vs. DoF.\label{fig:Complex}}
\end{figure}

For the hardware complexity, the use of the proposed hybrid array
reduces the hardware complexity to $O(M_{\textrm{rf}})=O(N_{\textrm{vd}}N_{\textrm{hd}}+N_{\textrm{vs}}N_{\textrm{hs}})$,
while a fully digital array using the same number of antennas would
have a hardware complexity of $O(M_{\textrm{bs}})$. We
compare the system power consumption between our system, and the systems
using hourglass arrays \cite{2DSPARSE} and OBAs \cite{OBA}\footnote{For a fair comparison, all these systems do not consider
using spatial smoothing, and the periodicity of UCAs is considered
here.}. According to \cite{PowerConsumption}, the power
of a hybrid array is consumed by its RF chains, analog-to-digital converters (ADCs), local oscillators,
power amplifiers, and phase shifters. Since the proposed method, and
the methods using hourglass arrays and OBAs, are only different in
terms of the design of RF connection matrices, the numbers of required
phase shifters, local oscillators, and power amplifiers are the same
across these three methods. As a result, the difference of system power consumption between  the methods depends on  the numbers of RF chains and ADSs. Also note that the number of RF chains is equal to the number of
ADCs.
Assume that the dimension of phase-shifter output ports  is $M_{\textrm{hr}}\times M_{\textrm{vr}}=29\times17$.
In our system, only $M_{\textrm{rf}}=N_{\textrm{vd}}N_{\textrm{hd}}+N_{\textrm{vs}}N_{\textrm{hs}}-1=32$
RF chains (and ADSs) are required by solving \eqref{eq:ops}. However, in the
systems using hourglass arrays and OBAs, the numbers of required RF
chains are 37 and 35, respectively.

As for the signal processing complexity, we compare the computational
complexity of the proposed tensor-based algorithm with its matrix-based
counterpart, which formulates the signal model in the matrix form
and uses matrix-based ESPRIT-MUSIC algorithm for DoA estimation. For
matrix-based algorithms, the computational complexity of performing
SVD to the measurement sample matrix and truncating its rank to $K$
is $O(N_{\textrm{ss}}N_{\textrm{hdc}}M_{\textrm{t}}N_{\textrm{is}}K)$.
The complexities of estimating the elevation and azimuth angles are
$O(K^{3}+N_{\textrm{ss}}N_{\textrm{hdc}})$ and $O(N_{\textrm{hdc}}K^{2}+N_{\textrm{hdc}}^{2}KD),$
respectively. $D$ is the size of search dimension. For the proposed
tensor-based algorithm, the computational complexity of taking the
HOSVD of the tensor model is $O(N_{\textrm{ss}}N_{\textrm{hdc}}M_{\textrm{t}}N_{\textrm{is}}K)$.
The computational complexities of estimating elevation and azimuth
angles are $O(N_{\textrm{ss}}N_{\textrm{hdc}}+K^{3})$ and $O(N_{\textrm{ss}}N_{\textrm{hdc}}M_{\textrm{t}}N_{\textrm{is}}K+N_{\textrm{hdc}}^{2}KD)$,
respectively. The new tensor-based algorithm needs slightly more computations,
but is in the same order as its matrix-based counterpart.
\begin{figure*}
\begin{centering}
\includegraphics[width=14cm]{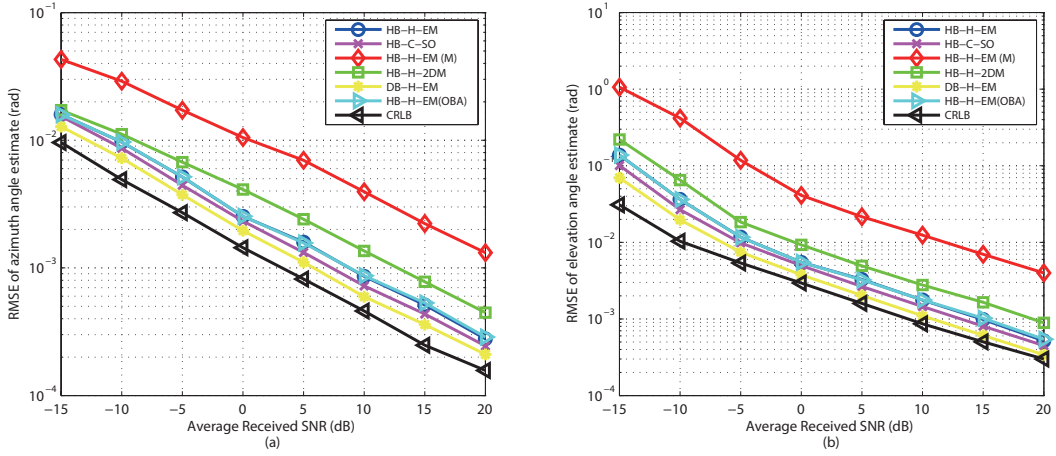}
\par\end{centering}
\centering{}\caption{RMSE vs. the average received SNR for the estimation of DoAs for identifying
$K=50$ devices. (a) Azimuth angle; (b) Elevation angle.\label{fig:snr1}}
\end{figure*}
\begin{figure*}
\begin{centering}
\includegraphics[width=14cm]{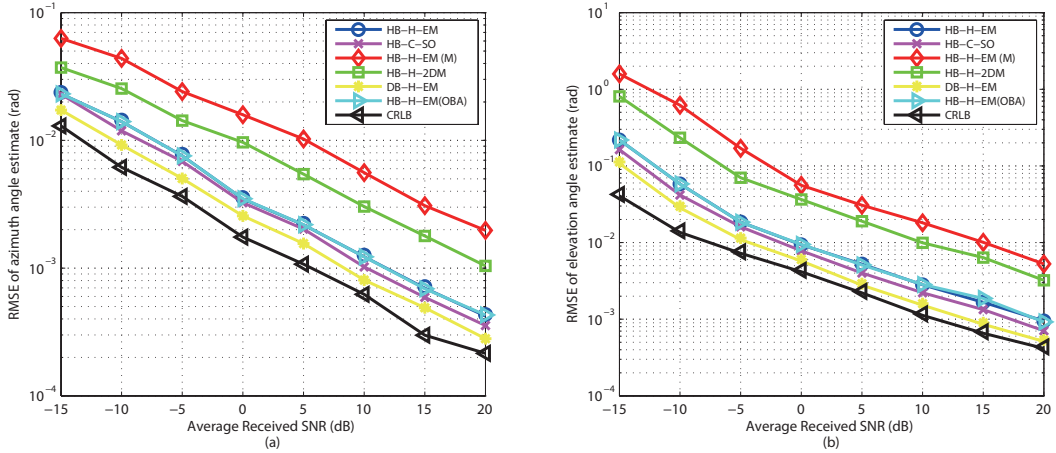}
\par\end{centering}
\centering{}\caption{RMSE vs. the average received SNR for the estimation of DoAs for identifying
$K=200$ devices. (a) Azimuth angle; (b) Elevation angle.\label{fig:snr1-1}}
\end{figure*}

We also compare our algorithm with the CP-based simultaneous-orthogonal
matching pursuit (S-OMP) algorithm \cite{tensormmWave}. The algorithm
first applies CP decomposition to decompose the received signal tensor
model, and then applies S-OMP to estimate the parameters. The complexities
of the CP decomposition and S-OMP are $O(N_{\textrm{ss}}N_{\textrm{hdc}}M_{\textrm{t}}N_{\textrm{is}}K+N_{\textrm{ss}}N_{\textrm{hdc}}K^{2}+K^{3})$
and $O(N_{\textrm{ss}}N_{\textrm{hdc}}M_{\textrm{t}}N_{\textrm{is}}(N_{1}+N_{2}))$,
respectively, where $N_{1}\gg K$ and $N_{2}\gg K$ are the dimensions
of the OMP grid. The complexity of CP-based subspace algorithm is
much higher than that of our HOSVD-based algorithm.

Note that all the operations in our DoA estimation
algorithm are on the signal data tensor model directly. If our algorithms
operate on the signal covariance tensor, we need to calculate the signal covariance tensor model \cite{Tensor2DMUSIC,tensorMUSIC,tensorBFmmWave}
\begin{equation}
\mathcal{R}_{\textrm{ss}}=\frac{1}{M_{\textrm{t}}N_{\textrm{is}}}\sum_{m=1}^{M_{\textrm{t}}N_{\textrm{is}}}\mathcal{Y}_{\textrm{ss},m}\circ\mathcal{Y}_{\textrm{ss},m}^{*}\in\mathbb{C}^{N_{\textrm{ss}}\times N_{\textrm{hdc}}\times N_{\textrm{ss}}\times N_{\textrm{hdc}}},\label{eq:R}
\end{equation}
where $\mathbf{Y}_{\textrm{ss},m}\in\mathbb{C}^{N_{\textrm{ss}}\times N_{\textrm{hdc}}}$
is the $m$-th subtensor of $\mathcal{Y}_{\textrm{ss}}$, $m=1,2,\ldots,M_{\textrm{t}}N_{\textrm{is}}$,
and then, take the HOSVD of \eqref{eq:R}. The computational complexity
of this process is $O(M_{\textrm{t}}N_{\textrm{is}}N_{\textrm{ss}}^{2}N_{\textrm{hdc}}^{2}+N_{\textrm{ss}}^{2}N_{\textrm{hdc}}^{2}K)$,
which needs much more computations than our algorithms.

The results of the computational complexities of our algorithm, its
matrix-based counterpart, CP-based S-OMP algorithm, and the proposed
algorithm operating on the covariance tensor, as a function of the
system DoF, $O(N_{\textrm{ss}}N_{\textrm{hdc}})$, are presented in
Fig. \ref{fig:Complex}, where $M_{\textrm{t}}=20$, $N_{\textrm{is}}=20$,
$K=15$, and $N_{1}=N_{2}=50$. The figure shows that our proposed
algorithm requires more computations than its matrix-based counterpart
at the gain of significantly improved DoA estimation performance,
as will be seen in Section VI. However, compared to the other two
algorithms, the computational complexity of our algorithm is much
lower.

\section{Simulation Results}

In this section, simulation results are provided to demonstrate the
performance of the proposed algorithm. The system bandwidth is $B=1$
GHz. The number of time frames is set to $M_{\textrm{t}}$= 20. The
reference radial frequency $f=28$ GHz. The vertical spacing between
adjacent receiving UCAs is $h=0.5\lambda$ and the radius of the UCyA
is $r=2\lambda$, where $\lambda=c/f$ and $c$ is the speed of light. The geometry parameters of
the UCyA are $M_{\textrm{v}}=25$ and $M_{\textrm{h}}=30$. For the
hybrid beamforming, we set $N_{\textrm{vd}}=5$, $N_{\textrm{hd}}=5$,
$N_{\textrm{vs}}=5$, and $N_{\textrm{hs}}=6$, so there are $M_{\textrm{rf}}=N_{\textrm{vd}}N_{\textrm{hd}}+N_{\textrm{vs}}N_{\textrm{hs}}-1=54$
RF chains in our system.

Fig. \ref{fig:snr1} plots the root mean square errors (RMSEs) for
the estimates of the azimuth and elevation angles versus the average
received SNR, where the DoAs of $K=50$ devices are estimated. By
using the proposed nested sparse hybrid beamforming, we compare the
proposed HOSVD-based ESPRIT-MUSIC (HB-H-EM) algorithm with its reduced
version in the matrix form (HB-H-EM (M)), the CP-based S-OMP (HB-C-SO)
algorithm \cite{tensormmWave}, the HOSVD-based 2D MUSIC (HB-H-2DM)
algorithm \cite{Tensor2DMUSIC}, and the proposed
algorithm but using OBA to design the RF connection matrix (HB-H-EM
(OBA)). We also apply the proposed algorithm for fully digital beamforming
(DB-H-EM), and provide the  Cram\'{e}r-Rao lower bound (CRLB) \cite{bounds}
as a reference. We can see that all the estimated algorithms approach
the CRLB, as the average received SNR increases. Fig. \ref{fig:snr1}
also shows that our proposed tensor-based algorithm provides a better
accuracy than its matrix-based counterparts. This is because the tensor-based
algorithm can suppress the noise components in each mode of the signal
tensor model, while the matrix-based algorithm can only suppress the
noise in the time domain corresponding to the third mode in this paper.
By applying CP to decompose the signal tensor model, HB-C-SO achieves
better estimation performance than other HOSVD-like algorithms. However,
the performance improvement is limited because HB-C-SO uses S-OMP
to estimate the parameters, generating quantized estimates only. We
also observe that the precision of the angle estimation of our proposed
algorithm is a bit lower than that of DB-H-EM. However, both DB-H-EM
and HB-C-SO have a much higher complexity than our algorithm, as analyzed
in  Section V-C. In addition, the estimation accuracy
is nearly the same between the proposed HB-H-EM and HB-H-EM (OBA).
This is because the DoA estimation accuracy depends on the dimension
of the difference coarray, not the dimension of the RF chain network,
while the constructed difference coarrays of the two methods are identical.

\begin{figure}
\begin{centering}
\includegraphics[width=7cm]{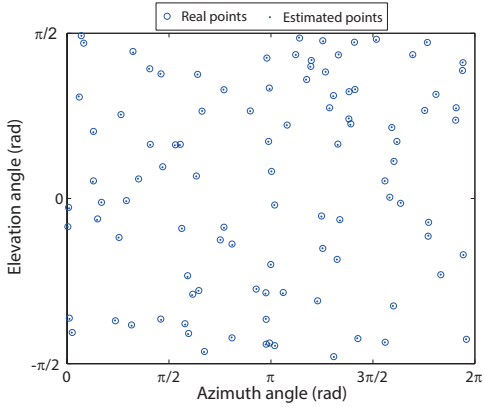}
\par\end{centering}
\centering{}\caption{2-D DoA estimation by using the proposed algorithm for $100$ devices.\label{fig:2-D-DOA-estimation}}
\end{figure}

Fig. \ref{fig:snr1-1} shows the RMSEs for the estimates of DoAs versus
the average received SNR. The number of devices is $K=200$.
By comparing Figs. \ref{fig:snr1} and \ref{fig:snr1-1}, we see that
for the fixed SNR and a fixed number of RF chains, the estimation
accuracy of all the tested algorithms decreases as $K$ increases.
This is because as $K$ grows, more signal components need to be estimated
and distinguished. Compared with Fig. \ref{fig:snr1}, Fig. \ref{fig:snr1-1}
also shows that the performance degeneration of HB-H-2DM is larger
than other algorithms. This is because HB-H-2DM uses signal covariance
tensors for the 2-D DoA estimation, and its MUSIC spectrum is a product
of multiple separable second-order mode-$n$ spectra, which results
in undesirable cross-terms \cite{Tensor2DMUSIC} and compromises the
estimation accuracy.

Fig. \ref{fig:2-D-DOA-estimation} evaluates the performance of our
proposed algorithm. Without loss of generality, here we estimate the
2-D DoAs of $K=100$ devices, where $\textrm{SNR}=5$ dB. As seen
from the results, the proposed algorithm can accurately estimate the
azimuth and elevation angles of $100$ devices. All the estimates
are well matched with the actual values, while we only use 54 RF chains
in our system.

\section{Conclusion}

We presented a novel sparse nested hybrid UCyA for mIoT networks.
By exploiting the difference coarray technique and tailoring for the
UCyA, we proposed a channel estimation scheme based on the second-order
statistics of the received signals. As a result, the designed hybrid
array only requires a small number of RF chains to achieve DoA estimation
for a massive number of IoT devices. We proposed a spatial smoothing-based
method to enhance the $n$-ranks of the signal tensor model. By using
the method, a large enough rank in each mode of the signal tensor
model was provided for the DoA estimation of $K$ devices. Given the
designed hybrid array, a new tensor-based 2-D DoA estimation algorithm
was proposed, and it can significantly improve the estimation accuracy
while reducing the computational complexity. Simulation results demonstrated
that our proposed hybrid array system can accurately estimate the
2-D DoAs of a large number of IoT devices.

\section*{Appendix I\protect \protect \protect \protect \protect \protect
\protect \protect }

\section*{Proof of Theorem 1}

Let $\gamma(\theta_{k})=2\pi r\sin(\theta_{k})/\lambda.$ The
phase-space transformation of   $\mathbf{a}_{\textrm{h}}(\theta_{k},\phi_{k})$
can be expressed as
\begin{align}
 & a_{\textrm{hps},p}(\theta_{k},\phi_{k})=\sum_{m_{\textrm{h}}=1}^{M_{\textrm{h}}}\left(a_{\textrm{h},m_{\textrm{h}}}(\theta_{k},\phi_{k})\right)e^{-j\frac{2\pi(m_{\textrm{h}}-1)}{M_{\textrm{h}}}p}\nonumber \\
 & =\sum_{m_{\textrm{h}}=1}^{M_{\textrm{h}}}\left(\frac{1}{\sqrt{M_{\textrm{h}}}}e^{j\gamma(\theta_{k})\cos(\phi_{k}-\varphi_{m_{\textrm{h}}})}\right)e^{-j\frac{2\pi(m_{\textrm{h}}-1)}{M_{\textrm{h}}}p}\nonumber \\
 & \stackrel{(\textrm{a})}{=}\sum_{m_{\textrm{h}}=1}^{M_{\textrm{h}}}\left(\frac{1}{\sqrt{M_{\textrm{h}}}}\sum_{q=-\infty}^{\infty}j^{q}J_{q}\left(\gamma(\theta_{k})\right)e^{jq(\phi_{k}-\varphi_{m_{\textrm{h}}})}\right)\nonumber \\
 & \qquad\times e^{-j\frac{2\pi(m_{\textrm{h}}-1)}{M_{\textrm{h}}}p}\nonumber \\
 & \stackrel{(\textrm{b})}{=}\frac{1}{\sqrt{M_{\textrm{h}}}}\sum_{Q=-\infty}^{\infty}M_{\textrm{h}}j^{(QM_{\textrm{h}}-p)}J_{(QM_{\textrm{h}}-p)}\left(\gamma(\theta_{k})\right)\nonumber \\
 & \qquad\times e^{j(QM_{\textrm{h}}-p)\phi_{k}}\nonumber \\
 & \stackrel{(\textrm{c})}{=}\sqrt{M_{\textrm{h}}}\left[\vphantom{\sum_{Q=-\infty,Q\neq0}^{\infty}}j^{p}J_{p}\left(\gamma(\theta_{k})\right)e^{-jp\phi_{k}}\right.\nonumber \\
 & \qquad\left.+\sum_{Q=-\infty,Q\neq0}^{\infty}\varepsilon_{p,Q}\left(\gamma(\theta_{k}),\phi_{k}\right)\right]\label{eq:T1}
\end{align}
where
\begin{align*}
 & \varepsilon_{p,Q}\left(\gamma(\theta_{k}),\phi_{k}\right)=j^{(QM_{\textrm{h}}-p)}J_{(QM_{\textrm{h}}-p)}\left(\gamma(\theta_{k})\right)e^{j(QM_{\textrm{h}}-p)\phi_{k}}.
\end{align*}
In \eqref{eq:T1}, $(\textrm{a})$ and $(\textrm{c})$ follow the
important properties of the Bessel function, i.e., $e^{jx\cos y}=\sum_{v=-\infty}^{\infty}j^{v}J_{v}(x)e^{jvy}$
and $J_{-v}(x)=(-1)^{v}J_{v}(x)$, respectively. $(\textrm{b})$ is
obtained by letting $p+q=QM_{\textrm{h}}$ \cite{bessel}.

Let $x=v\rho,\rho\in(0,1)$ and $v\in\mathbb{Z}^{+}$. The Bessel function,
$J_{v}(x)$, whose order $v$ exceeds its argument, $x$, can be written
in the following form \cite{bessel}

\begin{equation}
J_{v}(v\rho)=\frac{1}{\pi}\int_{0}^{\pi}\exp\left(-vF(\vartheta,\rho)\right)d\vartheta,\label{eq:bessel}
\end{equation}
where
\[
F(\vartheta,\rho)=\log\left(\frac{\vartheta+\sqrt{\vartheta^{2}-\rho^{2}\sin^{2}\vartheta}}{\rho\sin\vartheta}\right)-\cot\vartheta\sqrt{\vartheta^{2}-\rho^{2}\sin^{2}\vartheta}.
\]

The partial derivative of \eqref{eq:bessel} with
respect to $\rho$ is given by

\begin{align}
\frac{\partial}{\partial\rho}J_{v}(v\rho)=-\frac{v}{\pi}\int_{0}^{\pi}\frac{\partial F(\vartheta,\rho)}{\partial\rho}\exp\left(-vF(\vartheta,\rho)\right)d\vartheta\nonumber \\
=\frac{v}{\pi\rho}\int_{0}^{\pi}g(\vartheta,\rho)\exp\left(-vF(\vartheta,\rho)\right)d\vartheta,\label{eq:besselpartial}
\end{align}
where $g(\vartheta,\rho)=\left(\vartheta-\rho^{2}\sin\vartheta\cos\vartheta\right)/\sqrt{\vartheta^{2}-\rho^{2}\sin^{2}\vartheta}.$
Given that
\begin{align}
g(\vartheta,\rho) & =\frac{\vartheta-\rho^{2}\sin\vartheta\cos\vartheta}{\sqrt{\vartheta^{2}-\rho^{2}\sin^{2}\vartheta}}\geq\frac{\vartheta-\sin\vartheta\cos\vartheta}{\sqrt{\vartheta^{2}-\rho^{2}\sin^{2}\vartheta}}\nonumber \\
 & \geq\frac{\vartheta-\sin\vartheta}{\sqrt{\vartheta^{2}-\rho^{2}\sin^{2}\vartheta}}\geq0,
\end{align}
we have $\partial J_{v}(v\rho)/\partial\rho>0,$ and conclude that
$J_{v}(v\rho)$ is an increasing function of $\rho$. Thus, $J_{v}(v\rho)<J_{v}(v).$

On the other hand, the partial derivative of \eqref{eq:bessel}
with respect to $v$ is given by
\begin{equation}
\frac{\partial}{\partial v}J_{v}(v\rho)=-\frac{1}{\pi}\int_{0}^{\pi}F(\vartheta,\rho)\exp\left(-vF(\vartheta,\rho)\right)d\vartheta.
\end{equation}
Because
\begin{equation}
\frac{\partial}{\partial\vartheta}F(\vartheta,\rho)=\frac{(1-\rho\cot\vartheta)^{2}}{\sqrt{\vartheta^{2}-\rho^{2}\sin^{2}\vartheta}}+\sqrt{\vartheta^{2}-\rho^{2}\sin^{2}\vartheta}\geq0
\end{equation}
and $\partial F(0,\rho)/\partial\rho=-\sqrt{1-\rho^{2}}/\rho\leq0$,
we have $F(\vartheta,\rho)\geq F(0,\rho)\geq F(0,1)=0$ and hence,
$\partial J_{v}(v\rho)/\partial v<0$. This means that $J_{v}(v\rho)$
is a decreasing function of $v$, i.e., $J_{v}(v\rho)<J_{1}(\rho).$
Therefore, we have $J_{v}(v\rho)<J_{v}(v)<J_{1}(1)\approx0.4$ with
$\rho\in(0,1)$ and $v\in\mathbb{Z}^{+}$. For $\left|v\right|>\left|x\right|$,
$\left|J_{v}(x)\right|\approx0$ with $v\in\mathbb{Z}^{+}$. Based
on this property, , both $\varepsilon_{p,Q}\left(\gamma(\theta_{k}),\phi_{k}\right)$
and $J_{p}\left(\gamma(\theta_{k})\right)$ in \eqref{eq:T1} can
be suppressed in the case of $\left|p\right|>P>\gamma(\theta_{k})$,
since $P\geq\left\lfloor 2\pi r/\lambda\right\rfloor >2$ and $M_{\textrm{h}}\geq\left\lfloor 4\pi r/\lambda\right\rfloor >2$.
When $\left|p\right|\leq P$, we can only ignore $\varepsilon_{p,Q}\left(\gamma(\theta_{k}),\phi_{k}\right)$.
Thus, \eqref{eq:T1} can be approximated by \eqref{eq:Theorem1}.
This concludes the proof.

\section*{Appendix II\protect \protect \protect \protect \protect \protect
\protect \protect }

\section*{Proof of Theorem 2}

Define $\mathcal{Y}_{\textrm{dfs}}=\left\llbracket \mathcal{Z}_{\textrm{df}};\mathbf{A}_{\textrm{\textrm{df}v}},\mathbf{A}_{\textrm{\textrm{df}h}},\mathbf{D}\right\rrbracket $,
which is the noise-free model of $\mathcal{Y}_{\textrm{df}}$. Thus,
$\mathcal{Y}_{\textrm{dfs}}$ consists of all the signal space components.

Because $\mathbf{A}_{\textrm{\textrm{df}v}}\in\mathbb{C}^{N_{\textrm{vdc}}\times K}$
and $\mathbf{A}_{\textrm{\textrm{df}h}}\in\mathbb{C}^{N_{\textrm{hdc}}\times K}$
are Vandermonde matrices, and in our system, we have $N_{\textrm{vdc}}\geq K$
and $N_{\textrm{hdc}}\geq K$, according to uniqueness condition of
the CP decomposition, the $n$-ranks of $\mathcal{Y}_{\textrm{dfs}}$
depends on the rank of $\mathbf{D}$.

On the other hand, the SVD of the mode-$n$ unfolding of $\mathcal{Y}_{\textrm{dfs}}$,
$\mathbf{Y}_{\textrm{dfs}}{}_{(n)}$, can be written as $\mathbf{Y}_{\textrm{dfs}}{}_{(n)}=\mathbf{U}_{\textrm{vs},n}\mathbf{\mathbf{\Sigma}}_{\textrm{vs},n}\mathbf{\mathbf{V}}_{\textrm{vs},n}^{H},$
where $n=1,2,3$, and we have $\textrm{Rank}(\mathbf{Y}_{\textrm{dfs}}{}_{(n)})=\textrm{Rank}(\mathbf{U}_{\textrm{vs},n})=\textrm{Rank}(\mathbf{\mathbf{\Sigma}}_{\textrm{vs},n})=\textrm{Rank}(\mathbf{\mathbf{V}}_{\textrm{vs},n}).$
When $\textrm{Rank}(\mathbf{D})<K$, we have $\textrm{Rank}(\mathbf{Y}_{\textrm{dfs}}{}_{(n)})<K,$
and thus $\textrm{Rank}(\mathbf{U}_{\textrm{v,}n})<K.$

This concludes the proof of Theorem 2.

\bibliographystyle{IEEEbib}

\begin{IEEEbiography}[{\includegraphics[width=1in,height=1.25in,clip,keepaspectratio]{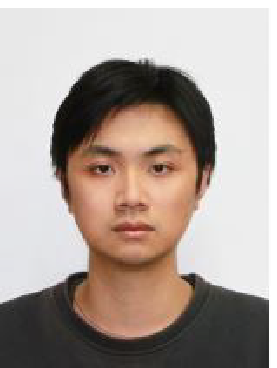}}]{Zhipeng Lin}
(S'17) is currently working toward the dual Ph.D. degrees in communication and information engineering with the School of Information and Communication Engineering, Beijing University of Posts and Telecommunications, Beijing, China, and the School of Electrical and Data Engineering, University of Technology of Sydney, Sydney, NSW, Australia. His current research interests include millimeter-wave communication, massive MIMO, hybrid beamforming, wireless localization, and tensor processing.
\end{IEEEbiography}

\begin{IEEEbiography}[{\includegraphics[width=1in,height=1.25in,clip,keepaspectratio]{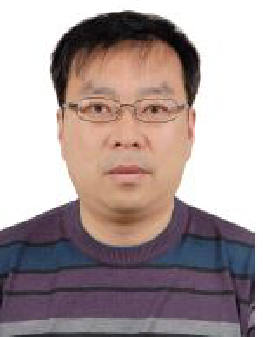}}]{Tiejun Lv}
(M'08-SM'12) received the M.S. and Ph.D. degrees in electronic engineering from the University of Electronic Science and Technology of China (UESTC), Chengdu, China, in 1997 and 2000, respectively. From January 2001 to January 2003, he was a Postdoctoral Fellow with Tsinghua University, Beijing, China. In 2005, he was promoted to a Full Professor with the School of Information and Communication Engineering, Beijing University of Posts and Telecommunications (BUPT). From September 2008 to March 2009, he was a Visiting Professor with the Department of Electrical Engineering, Stanford University, Stanford, CA, USA. He is the author of 3 books, more than 80 published IEEE journal papers and 180 conference papers on the physical layer of wireless mobile communications. His current research interests include signal processing, communications theory and networking. He was the recipient of the Program for New Century Excellent Talents in University Award from the Ministry of Education, China, in 2006. He received the Nature Science Award in the Ministry of Education of China for the hierarchical cooperative communication theory and technologies in 2015.
\end{IEEEbiography}

\begin{IEEEbiography}[{\includegraphics[width=1in,height =1.25in,clip,keepaspectratio]{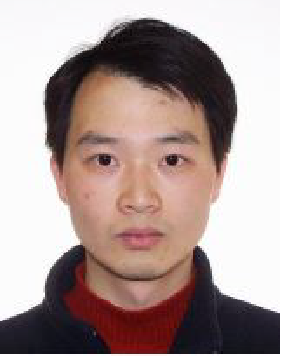}}]{Wei Ni}
(M'09-SM'15) received the B.E. and Ph.D. degrees in Electronic Engineering from Fudan University, Shanghai, China, in 2000 and 2005, respectively. Currently, he is a Group Leader and Principal Research Scientist at CSIRO, Sydney, Australia, and an Adjunct Professor at the University of Technology Sydney and Honorary Professor at Macquarie University, Sydney. He was a Postdoctoral Research Fellow at Shanghai Jiaotong University from 2005 -- 2008; Deputy Project Manager at the Bell Labs, Alcatel/Alcatel-Lucent from 2005 to 2008; and Senior Researcher at Devices R\&D, Nokia from 2008 to 2009. His research interests include signal processing, stochastic optimization, learning, as well as their applications to network efficiency and integrity.

Dr Ni is the Chair of IEEE Vehicular Technology Society (VTS) New South Wales (NSW) Chapter since 2020 and an Editor of IEEE Transactions on Wireless Communications since 2018. He served first the Secretary and then Vice-Chair of IEEE NSW VTS Chapter from  2015 to 2019, Track Chair for VTC-Spring 2017, Track Co-chair for IEEE VTC-Spring 2016, Publication Chair for BodyNet 2015, and  Student Travel Grant Chair for WPMC 2014.
\end{IEEEbiography}

\begin{IEEEbiography}[{\includegraphics[width=1in,height=1.25in,clip,keepaspectratio]{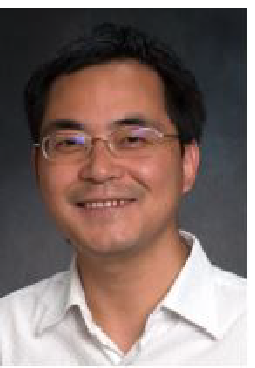}}]{J. Andrew Zhang}
(M'04-SM'11) received the B.Sc. degree from Xi'an JiaoTong University, China, in 1996, the M.Sc. degree from Nanjing University of Posts and Telecommunications, China, in 1999, and the Ph.D. degree from the Australian National University, in 2004.

Currently, Dr. Zhang is an Associate Professor in the School of Electrical and Data Engineering, University of Technology Sydney, Australia. He was a researcher with Data61, CSIRO, Australia from 2010 to 2016, the Networked Systems, NICTA, Australia from 2004 to 2010, and ZTE Corp., Nanjing, China from 1999 to 2001.  Dr. Zhang's research interests are in the area of signal processing for wireless communications and sensing. He has published more than 180 papers in leading international Journals and conference proceedings, and has won 5 best paper awards. He is a recipient of CSIRO Chairman's Medal and the Australian Engineering Innovation Award in 2012 for exceptional research achievements in multi-gigabit wireless communications.
\end{IEEEbiography}

\begin{IEEEbiography}[{\includegraphics[width=1in,height=1.25in,clip,keepaspectratio]{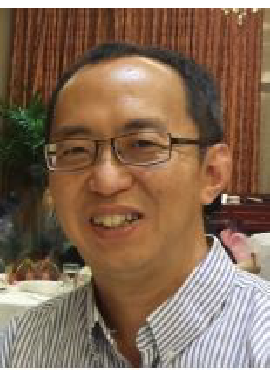}}]{Ren Ping Liu}
(M'09-SM'14) received his B.E. and M.E. degrees from Beijing University of Posts and Telecommunications, China, and the Ph.D. degree from the University of Newcastle, Australia.

He is currently a Professor and Head of Discipline of Network \& Cybersecurity at University of Technology Sydney. Professor Liu was the co-founder and CTO of Ultimo Digital Technologies Pty Ltd, developing IoT and Blockchain. Prior to that he was a Principal Scientist and Research Leader at CSIRO, where he led wireless networking research activities. He specialises in system design and modelling and has delivered networking solutions to a number of government agencies and industry customers. His research interests include wireless networking, Cybersecurity, and Blockchain.

Professor Liu was the founding chair of IEEE NSW VTS Chapter and a Senior Member of IEEE. He served as Technical Program Committee chairs and Organising Committee chairs in a number of IEEE Conferences. Prof Liu was the winner of Australian Engineering Innovation Award and CSIRO Chairman medal. He has over 200 research publications.
\end{IEEEbiography}

\end{document}